\theoremstyle{plain}
\newtheorem{T}{Theorem}
\newtheorem{Lemma}{Lemma}
\newtheorem{Prop}{Proposition}
\newtheorem{Corollary}{Corollary}
\renewcommand{\qed}{\nobreak \ifvmode \relax \else
      \ifdim\lastskip<1.5em \hskip-\lastskip
      \hskip1.5em plus0em minus0.5em \fi \nobreak
      \vrule height0.75em width0.5em depth0.25em\fi}
\newcommand{\<}{\langle}
\renewcommand{\>}{\rangle}
\newcommand{\eref}[1]{(\ref{#1})}
\def\re{\eref}
\def\1{1\!\!\!\!1}
\def\RR{{\mathbb R}}
\def\UU{{\mathcal U}}
\def\CC{{\mathbb C}}
\def\OO{\Omega}
\def\al{\alpha}
\def\gr{\beta}
\def\er{\varepsilon}
\def\si{\sigma}
\def\ze{\zeta}
\def\FF{{\mathcal F}}
\def\LL{{\mathcal L}}
\def\BB{{\mathcal B}}
\def\KK{{\mathcal K}}
\def\AA{{\mathcal A}}
\def\dd{\partial}
\def\n{\nabla}
\def\dist{{\rm dist}}
\def\clos{{\rm clos}\,}
\def\te{\theta}
\def\wt{\widetilde}
\def\div{{\rm div}\,}
\def\rot{{\rm curl}\,}
\def\supp{{\rm supp}\,}
\def\Ker{{\rm Ker}\,}
\def\Ran{{\rm Ran}\,}
\def\Dom{{\rm Dom}\,}
\def\eps{\varepsilon}
\def\Ga{\Gamma}
\def\ga{\gamma}
\def\phi{\varphi}
\def\be{\beta}
\def\hOmega{{\widehat \Omega}}
\def\test_field{C^{\infty}_0(\Omega; \mathbb{C}^3)}
\def\htest_field{C^{\infty}_0(\hOmega; \mathbb{C}^3)}
\def\ov{\overline}
\newcommand{\cir}[1]{{#1}_0}
\def\ss{\subset}
\def\sm{\setminus}
\def\ti{\tilde}
\def\wti{\widetilde}
\title{C*-algebras and inverse problem of electrodynamics}
\author
{
M.I.~Belishev
\footnote{St.Petersburg Department of V.A.~Steklov Institute of Mathematics of the Russian Academy of Sciences,
27 Fontanka, St. Petersburg 191023, Russia; belishev@pdmi.ras.ru.
Author is supported by the RFBR grant No. 11-01-00407-a.}, \and
M.N. Demchenko
\footnote{
St.Petersburg Department of V.A.~Steklov Institute of Mathematics of the Russian Academy of Sciences.
27 Fontanka, St. Petersburg 191023, Russia.
Chebyshev Laboratory, St. Petersburg State University, 14th Line, 29b, Saint Petersburg, 199178 Russia;
demchenko@pdmi.ras.ru.
Author is supported by the Chebyshev Laboratory (Department of Mathematics and Mechanics, St. Petersburg State University)  under RF Government grant 11.G34.31.0026.}
}
\begin{document}
\maketitle
\begin{abstract}
We consider the dynamical inverse problem for the Maxwell system
on a Riemannian $3$-manifold with boundary in a time-optimal set-up.
Using BC-method we show that the data of the inverse problem (electromagnetic measurements on the boundary)
determine a $C$*-algebra, which
has a spectrum homeomorphic to a part of the manifold. This part depends on the
duration of measurements.
\end{abstract}


\section{Introduction}
We consider initial boundary-value problem for the Maxwell system
on a smooth compact connected Riemannian $3$-manifold $\OO$ with connected boundary $\Ga:=\dd\OO$
for some $T>0$:
\begin{align}
    & e_t = \rot h, \quad h_t = -\,\rot e, \quad (x,t)\in\OO\times (0,T), \notag\\
    & e\mid_{t=0}\, = h\mid_{t=0}\, = 0, \notag\\
    & e_\te\mid_{\Ga\times[0,T]}\, = f.
    \label{Maxwell}
\end{align}
Here $e$ and $h$ are vector fields (electric and magnetic field respectively)
on the manifold depending on time $t\in[0,T]$,
$(\,\cdot)_t$ denotes a derivative by time $t$,
$(\,\cdot)_\te$ is a tangent part of the field on the boundary.
{\em Boundary control} $f$ is a tangent vector field on the boundary,
depending on time.
If $f$ is smooth and satisfies $\supp f \ss \Ga\times (0, T]$,
then the problem~\re{Maxwell} has unique classical solution.

One can set up an inverse problem, related to
the initial boundary-value problem for the Maxwell system,
consisting in recovering of parameters of the medium
by electro\-mag\-ne\-tic measurements on the boundary.
In particular, one can consider the case when $\OO$ is a domain in Euclidean space,
and Maxwell equations contain coefficients $\er$, $\mu$,
describing electric permittivity and magnetic permeability:
$$
	\er e_t = \rot h, \quad \mu h_t = -\,\rot e. 
$$
Then inverse problem consists in recovering of these coefficients or, for example,
of product $\er\mu$ (if the coofficients are scalar),
which means to recover velocity of propagation of electromagnetic waves in medium.

In this work we assume $\er = \mu = 1$, $\OO$ is assumed to be a Riemannian manifold with boundary,
and our goal is to recover the topology of this manifold
(by the data of the inverse problem we construct a topological space homeomorphic to the manifold).
As data we take a {\em response operator}, acting on the boundary control $f$ by the rule
\begin{equation}
	R^T: f \mapsto -\nu \times h_\te |_{\Ga\times[0,T]},
	\label{RT}
\end{equation}
($\nu$ is an inward unit normal vector on the boundary $\Ga$, $\times$ is a vector product),
where $h$ is a magnetic part of solution of the problem~\re{Maxwell} with control $f$.
Using $R^T$ as data we deal with the {\em dynamical} inverse problem.
Moreover, we consider the {\em time optimal} set-up of dynamical inverse problem:
by the response operator $R^{2T}$ we recover the topology of the boundary layer $\OO^T$
of width $T$ (Theorem~\ref{thecorollary}).
Due to the finiteness of the velocity of electromagnetic wave propagation and
simple kinematic reasons we need a response operator with doubled time $R^{2T}$
to recover the layer $\OO^T$.

The inverse problem for the Maxwell system is a subject of a number of papers.
In many of them rather restrictive assumptions concerning unknown coefficients are made.
In book~\cite{book} solutions of a number of inverse problems arising
in mathematical physics are given under assump\-tion that coefficients defined in a half-space
depend only on the depth. 
Paper~\cite{Rom} is devoted to the time-optimal inverse problem of electrodynamics in a half-space
assuming that unknown coefficients are smooth with respect to
the depth 
and analytical in planes that are parallel to the boundary of the half-space. 
Uniqueness and stablility of the solution is proved.

The most general case without restrictive assumptions on coefficients, except smooth\-ness,
was conidered in~\cite{BG00}, \cite{DAA}.
The result of these works is uniqueness of recovering of scalar $\er$, $\mu$
in time-optimal set-up.

The inverse problem on Riemannian manifold with boundary in time-optimal set-up was considered in~\cite{BD}.
Uniqueness of recovering of $\OO^T$ as Riemannian manifold (up to isometry)
by $R^{2T}$ is proved, which is stronger than Theorem~\ref{thecorollary}.
Our goal is to give another scheme of solving the inverse problem, based on
the connection of inverse problems and the theory of $C$*-algebras.

First this connection was used in~\cite{Belishev2003} to solve the electric impedance tomography problem.
In~\cite{GeomRings} the inverse problem for wave equation on Rieman\-nian manifold with boundary
was solved.
There was shown, that the data of the inverse problem determine some operator algebra,
which yields some topological space homeomorphic to the manifold
(more precisely, to the layer $\OO^T$).
Correspondence between algebras and topological spaces is established in the theory of $C$*-algebras~--
this will be discussed in section~\ref{algebras_sec}.
To solve the inverse problem an operator algebra is constructed using operators, which
are obtained from some operator model of the dynamical system.
This model can be constructed by a response operator using the boundary control method (BC-method)~--
this procedure is described in details in section~\ref{DynSection}.
Note that in~\cite{BD} such a model was obtained with the help of the BC-method as well,
while recovering of the manifold by this model was completely different.

We apply the described scheme to the inverse problem of electrodynamics.
The main difficulty that arises here is that obtained algebra is non-commutative.
But it turns out to be commutative modulo compact operators
and the corresponding quotient algebra yields a topological space homeomorphic to $\OO^T$.
This is formulated more precisely in Theorem~\ref{thetheorem_alg} and Corollary~\ref{FactorSpectrum}.

Authors thank B.A.~Plamenevsky for advice concerning the theory of $C$*-algebras.

\section{Some definitions and notations}

Let $\OO$ be a smooth connected compact $3$-manifold with boundary.
$\< \, \cdot, \cdot \>$ denotes inner product of (complex)
tangent vectors corresponding to the Riemannian structure on $\OO$,
$\dist(\cdot, \cdot)$ is a distant function on $\OO$.

The Riemannian structure determines vector product and operators $\n$, $\rot$, $\div$ 
acting on vector fields.
In arbitrary local coordinates corresponding formulas look as follows
($\epsilon^{jmn}$ is the parity of the permutation $(1,2,3) \to (j,m,n)$):
\begin{align*}
	(u\times v)^j & := \left(\det \{g_{mn}\}\right)^{-1/2} \sum_{k,l,m,n} \epsilon^{jmn} g_{mk} u^k g_{nl} v^l, \\
	(\n\phi)^j & := \sum_k g^{jk} \frac{\dd\phi}{\dd x^k}, \\
	(\rot y)^j & := \left(\det \{g_{mn}\}\right)^{-1/2} \sum_{m,n,k} \epsilon^{jmn} \frac{\dd}{\dd x^m} (g_{nk} y^k), \\
	\div y & := \left(\det \{g_{mn}\}\right)^{-1/2} \sum_{k} \frac{\dd}{\dd x^k}
		\left(\left(\det \{g_{mn}\}\right)^{1/2} y^k\right).
\end{align*}

Further $\n$, $\rot$, $\div$ are understood as generalized operators.
For generalized operators on the manifold we have standard formulas of vector analysis:
\begin{align}
	\div(\phi u) &= \<\n \phi, \ov u\> + \phi\,\div u, \label{divfu} \\
	\div(u\times v) &= \<\rot u, \ov v\> - \<u, \rot \ov v\>, \label{divuv} \\
	\rot(\phi u) &= \n \phi \times u + \phi\,\rot u. \label{rotfu}
\end{align}
In~\re{divfu} and \re{rotfu} function $\phi$ is Lipschitz,
the field $u$ is locally integrable, having locally integrable divergence.
In~\re{divuv} we may suppose that $u$ or $v$ is Lipschitz, and the other field is locally
integrable and has locally integrable $\rot$.

The Maxwell system on the Riemannian manifold is usually written for differential forms
using exterior differential instead of $\rot$.
We describe electric and magnetic fields $e$ and $h$ with vector fields on manifold
and thus the Maxwell equations have the same form as in Eucledian space.

$B(H)$ is the set of linear bounded operators in complex Hilbert space~$H$.

$\KK(H)\ss B(H)$ is the set of compact operators in $H$.
$\KK(H_1; H_2)$ denotes the set of compact operators from $H_1$ to $H_2$.

$C(X)$ is the space of continuous complex functions on a compact topological space $X$.
Define a norm in $C(X)$:
\begin{equation}
	\|f\| := \max_{x\in X} |f(x)|.
	\label{Cnorm}
\end{equation}

$\cir{C}(X)$ is a space of continuous complex functions on topological space $X$, such that
for any $\er>0$ the set $\{x\in X : |f(x)| \ge \er\}$ is compact.
The norm in $\cir{C}(X)$ is defined by the formula~\re{Cnorm}.
If $X$ is an open set in $\RR^n$, then $\cir{C}(X)$ consists of continuous functions in $\ov X$
vanishing on $\dd X$.
Clearly if the space $X$ is compact, then $\cir{C}(X) = C(X)$.

$C^k_c(U)$ is the space of $k$ times continuously differentiable functions with compact support in a subset
$U$ of a smooth manifold.

$(\cdot, \cdot)_H$ denotes inner product in Hilbert space $H$.
In $L_2(U)$ or $\vec L_2(U)$ inner product is denoted by $(\cdot, \cdot)_U$.

Integral of operator-valued function $F : \RR \to B(H)$ over interval $I\ss\RR$
is an operator corresponding to the following bilinear form in $H$
$$
	\int_I ds\, (F(s)\, z, w)_H\,,
$$
providing that the integral exists for all $z, w \in H$ and determines a bounded form.

\section{$C$*-algebras}
\label{algebras_sec}
Here we introduce basic notions of the theory of $C$*-algebras.
See~\cite{Dix}, \cite{Murphy} for more details.

Complex Banach space $\AA$ is called a $C$*-{\em algebra}, if
it is equipped with bilinear mapping (called product)
$$
	\AA \times \AA \to \AA, \quad (a, b) \mapsto ab,
$$
such that:
\begin{align}
	&a(bc) = (ab)c \label{prod1} \\
	&\|ab\| \le \|a\|\, \|b\|, \label{prod2}
\end{align}
and involution
$$
	\AA \to \AA, \quad a \mapsto a^*,
$$
such that:
\begin{align}
	& (\te\, a + \mu\, b)^* = \ov\te\, a^* + \ov\mu\, b^*, \quad \te, \mu\in \CC,  \label{inv1} \\
	& (a^*)^* = a, \label{inv2} \\
	& (ab)^* = b^* a^*, \label{inv3} \\
	& \| a^* a \| = \| a \|^2. \label{inv4}
\end{align}

It follows from the properties of involution that
$\|a^*\| = \|a\|$.
Indeed,
$$
	\|a\|^2 = \|a^* a \| \le \|a^*\| \, \|a\|,
$$
hence
$$
	\|a\| \le \|a^*\|.
$$
Passing to $a^*$ we obtain the reverse inequality:
$$
	\|a^*\| \le \| (a^*)^* \| = \| a \|.
$$

The banach space $B(H)$ of bounded linear operators in a Hilbert space $H$ is an example of $C$*-algebra.
As a product operation and involution one takes a composition and a mapping that maps an operator to its adjoint.

Another example is the space $\cir{C}(X)$ with the norm~\re{Cnorm} and involution $f^* := \ov f$.
$\cir{C}(X)$ is a {\em commutative} $C$*-algebra, i.e. $fg = gf$ for all $f, g\in\cir{C}(X)$.

The set $\BB \ss \AA$ is called $C$*-subalgebra of $C$*-algebra $\AA$, if
$\BB$ is a closed subspace of $\AA$ such that $\BB^* = \BB$ and every pair $b, b'\in\BB$ satisfies
$b\,b' \in \BB$.
Clearly that in this case $\BB$ is $C$*-algebra.

Let $Y$ be a subset of $C$*-algebra $\AA$.
There is a minimal $C$*-subalgebra $\BB \ss \AA$ containing $Y$
(it is an intersection of all $C$*-subalgebras containing $Y$).
We say that $\BB$ is a $C$*-{\em subalgebra generated by the set} $Y$.

It can be shown that $C$*-subalgebra $\BB$ generated by the set $Y\ss\AA$
consists of elements of $\AA$ that can be approximated by elements of the following form
\begin{equation*}
	\sum_{i=1}^{N} \te_i\, \prod_{j=1}^{n_i} y_{i,j} \, ,\quad
	\te_i \in \CC, \quad
	y_{i,j} \in Y\cup Y^*.
\end{equation*}

The set $I \ss \AA$ is called a {\em closed two-sided ideal (or simply closed ideal)} of $C$*-algebra $\AA$,
if it is a closed subspace of $\AA$ and
\begin{equation}
	ac, ca \in I \qquad \forall a \in \AA \quad \forall c \in I.
	\label{Ideal}
\end{equation}

Note the following non-trivial fact:
closed ideal is a $C$*-subalgebra in its $C$*-algebra.

An example of a closed ideal is the set $\KK(H) \ss B(H)$ consisting of compact operators.

Let $I$ be a closed ideal in $C$*-algebra $\AA$.
Define product on the quotient space $\AA/I$ as follows.
Let equivalence classes $\ti a, \ti b \in \AA/I$ contain elements $a, b \in \AA$ respectively.
The product $\ti a\, \ti b$ is defined as an equivalence class containing $a b$.
This definition does not depend on choice of $a$ and $b$ since
$$
	a' b' - a\, b = a' \, (b' - b) + (a' - a)\, b \in I
$$
(we used the definition ~\re{Ideal}).
Similarly $\AA/I$ can be equipped with involution.
It can be shown that $\AA/I$ is $C$*-algebra.

Let $\AA, \BB$ be $C$*-algebras.
Linear mapping $\pi: \AA \to \BB$ is called a {\em homomorphism},
if for any $a, a' \in \AA$ we have
\begin{align*}
	& \pi(a a') = \pi(a)\, \pi(a'), \\
	& \pi(a^*) = \pi(a)^*.
\end{align*}
It follows from the properties of product and involution that for any homomorphism we have
$$
	\|\pi(a)\| \le \|a\|.
$$
If homomorphism is one-to-one, then it is called {\em isomorphism}.
The last estimate applied to $\pi$ and $\pi^{-1}$ implies that
every isomorphism of $C$*-algebras preserves the norm.

$C$*-algebras $\AA$ and $\BB$ are said to be {\em isomorphic}, if there exists isomorphism from $\AA$ to $\BB$.
In this case we write
$$
	\AA \simeq \BB.
$$

If $I$ is a closed ideal in $C$*-algebra $\AA$,
then the mapping $\AA \to \AA/I$ that maps an element of $\AA$ to a corresponding equivalence class
is a homomorphism.
This mapping is called a {\em canonical homomorphism} (associated with closed ideal $I$).

Let $\pi: \AA \to \BB$ be a homomorphism of $C$*-algebras.
Then the image $\pi(\AA)$ is $C$*-algebra,
the kernel $\Ker\pi$ is a closed ideal in $\AA$ and we have
\begin{equation}
	\pi(\AA) \simeq \AA / \, \Ker\pi.
	\label{homfactor}
\end{equation}

Next we introduce notion of spectrum of commutative algebra $\AA$.
Let $\AA'$ be a dual space of $\AA$.
Functional $l\in\AA'$ is called multiplicative, if it satisfies
$$
	l(a_1 a_2) = l(a_1)\, l(a_2)
$$
for all $a_1, a_2 \in \AA$
(in other words, $l$ is a homomorphism from $\AA$ to $\CC$).
Denote by $\OO_\AA$ the set of nonzero multiplicative functionals on $\AA$.
This set is non-empty for non-trivial $C$*-algebras.
One can check that $\| l \| = 1$ for $l\in \OO_\AA$,
so $\OO_\AA$ is the subset of the unit sphere in $\AA'$.
With respect to *-weak topology of $\AA'$ the set $\OO_\AA$ is a locally-compact
Hausdorff topological space.
This space is called {\em spectrum} of commutative $C$*-algebra $\AA$.
Note that spectrum is also defined for non-commutative $C$*-algebras.

If $X$ is a locally compact Hausdorff topological space, then the spectrum $\OO_{\cir{C}(X)}$
is homeomorphic to $X$.
In this case multiplicative functionals have the following form
$$
	l_x(f) := f(x), \quad f\in \cir{C}(X),
$$
where $x \in X$.
If $x \ne x'$, then functionals $l_x$ and $l_{x'}$ are different.
Thus we have an injective (and, as can be shown, continuous) imbedding of $X$ to $\OO_{\cir{C}(X)}$.
This can be proved that there are no other multiplicative functionals on $\cir{C}(X)$,
so there is a bijection between $X$ and $\OO_{\cir{C}(X)}$.

The following fact shows the connection between the spectrum of $C$*-algebra and the spectrum
of a self-adjoint operator in a Hilbert space $H$.
Let $L$ be a self-adjoint operator, $\1$ is an identity operator.
$C$*-subalgebra of $B(H)$ generated by $L$ and $\1$ is commutative and its spectrum is homeomorphic
to the spectrum of $L$ endowed with standard topology of $\RR$.

Now we formulate Stone-Weierstrass theorem. 
\begin{T}
Let $X$ be a compact Hausdorff topological space,
$\AA$ is a $C$*-subalgebra of $C(X)$ such that:
for any different points $x, y \in X$ there exists $f\in \AA$ satisfying
$f(x) \ne f(y)$.
Then either $\AA$ coincides with $C(X)$ or $\AA$ consists of all functions 
from $C(X)$ vanishing in some fixed point $x_0\in X$.
\end{T}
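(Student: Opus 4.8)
The plan is to reduce the complex statement to the classical real Stone--Weierstrass theorem and then separate the two alternatives according to whether the unit lies in $\AA$. Since $\AA$ is a $C$*-subalgebra of $C(X)$, it is stable under the involution $f\mapsto\ov f$; hence with every $f\in\AA$ it contains $\Re f=(f+\ov f)/2$ and $\Ima f=(f-\ov f)/(2i)$. Writing $\AA_\RR$ for the set of real-valued functions in $\AA$, we obtain $\AA=\AA_\RR+i\,\AA_\RR$, where $\AA_\RR$ is a closed real subalgebra of the real-valued functions on $X$. The separation hypothesis passes to $\AA_\RR$: if $x\ne y$ and $f(x)\ne f(y)$ for some $f\in\AA$, then $\Re f$ or $\Ima f$ separates $x$ and $y$. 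It therefore suffices to understand $\AA_\RR$.

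First I would establish the \emph{lattice property}: $f\in\AA_\RR$ implies $|f|\in\AA_\RR$. Since $|f|=\sqrt{f^2}$ and $\sqrt t$ is a uniform limit on $[0,\|f\|^2]$ of polynomials $p_n$ with $p_n(0)=0$, the functions $p_n(f^2)$ are polynomials in $f^2$ with no constant term, hence lie in $\AA_\RR$, and they converge to $|f|$; closedness gives $|f|\in\AA_\RR$. Consequently $\AA_\RR$ is closed under $\max$ and $\min$ via $\max(f,g)=\tfrac12(f+g+|f-g|)$ and $\min(f,g)=\tfrac12(f+g-|f-g|)$. If $1\in\AA$, then $\AA_\RR$ contains the constants and separates points, so for any real-valued $\phi\in C(X)$ and any two points its values can be interpolated by an element of $\AA_\RR$. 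The standard compactness argument (fix $x$, take a minimum of finitely many interpolants to drop below $\phi+\eps$ everywhere, then a maximum over $x$ to rise above $\phi-\eps$) produces an element of $\AA_\RR$ within $\eps$ of $\phi$. Hence $\AA_\RR$ is all the real-valued functions and $\AA=C(X)$, the first alternative.

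It remains to treat $1\notin\AA$. Here I first show $\AA_\RR$ has a common zero. If not, then for each $x$ some $f_x\in\AA_\RR$ has $f_x(x)\ne0$; by compactness finitely many yield $F=\sum f_{x_i}^2\in\AA_\RR$ with $F\ge c>0$. Normalising $G=F/\|F\|$, so $G$ takes values in $[b,1]$ with $b>0$, I approximate $1/t$ uniformly on $[b,1]$ by polynomials $q_n$; then $t\,q_n(t)$ vanish at $0$, so $G\,q_n(G)\in\AA_\RR$, and $G\,q_n(G)\to1$ uniformly. Closedness would force $1\in\AA$, a contradiction. Thus there is a point $x_0$, unique by separation, with $f(x_0)=0$ for all $f\in\AA$, i.e. $\AA\ss\{f\in C(X):f(x_0)=0\}$. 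For equality I apply the first alternative to $\AA+\CC\,1$, which is a closed unital point-separating $C$*-subalgebra and hence equals $C(X)$: for any $f$ with $f(x_0)=0$, write $f=g+\la\,1$ with $g\in\AA$; evaluating at $x_0$ gives $\la=0$, so $f=g\in\AA$. This proves $\AA=\{f\in C(X):f(x_0)=0\}$, the second alternative.

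The \textbf{main obstacle} I expect is the bookkeeping forced by the possible absence of a unit: at every approximation step the polynomials must be arranged to vanish at the origin (both for $\sqrt t$ and for $1/t$), so that the resulting functions genuinely lie in the non-unital algebra $\AA_\RR$ rather than in $\AA_\RR+\RR\,1$. The device that each $t\,q_n(t)$ has zero constant term while $t\,q_n(t)\to1$ is precisely what converts invertibility of $G$ in $C(X)$ into membership of the unit in the closure of $\AA$, and it is the crux that pins down which of the two alternatives occurs.
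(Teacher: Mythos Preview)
The paper does not prove this statement: it is quoted as the classical Stone--Weierstrass theorem and used as a black box (in the proof of Lemma~\ref{minTau}), so there is no ``paper's own proof'' to compare against.

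Your argument is the standard one and is correct. A few small remarks. First, the claim that $\AA+\CC\cdot 1$ is a \emph{closed} subalgebra deserves one line: since $1\notin\AA$ and $\AA$ is closed, any convergent sequence $g_n+\la_n\to h$ has bounded $\la_n$ (otherwise $g_n/\la_n\to-1$ would put $1\in\AA$), and then a subsequence gives $h\in\AA+\CC\cdot 1$. Second, the separation hypothesis passes to $\AA+\CC\cdot 1$ trivially, so the unital case applies. Third, your handling of the ``no-constant-term'' bookkeeping is exactly right: both $p_n(t)$ for $\sqrt t$ and $t\,q_n(t)$ for the approximate inverse are arranged to vanish at $0$, which is what keeps every approximant inside the (possibly non-unital) $\AA_\RR$. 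With these points made explicit, the proof is complete.
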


\section{Maxwell system}
\label{DynSection}
\subsection{Domains of influence and controllability}
\label{FuncSpaces}

Fix number $T>0$.
$\vec{\LL}_2(\Ga)$ denotes the space of square integrable tangent fields on~$\Ga$.
Introduce the space of controls
$$
    \FF^T := L_2( [0, T]; \vec \LL_2(\Ga) )
$$
with natural inner product
$$
	(f,f')_{\FF^T} := \int_{\Ga\times[0,T]} \<f, f'\>\, d\Ga dt.
$$
In terms of the control theory $\FF^T$ is an {\em exterior space}
with respect to the system~\re{Maxwell}.

Introduce the set $\FF^T_0$ of smooth controls vanishing in the neighborhood of $\Ga\times\{t=0\}$.
The response operator, defined by~\re{RT}, acts in the space $\FF^T$ and is correctly defined on $\FF^T_0$.
Introduce the set of smooth controls supported on the open set $\si\ss\Ga$,
$$
	\FF_{0,\si}^T := \{ f\in\FF_0^T \, |\, \supp f\ss \si\times(0,T]\,\}
$$
and the set of smooth controls delayed for time $T-s$ ($0<s<T$):
$$
  \FF_{0,\si}^{T,s} := \{ f\in\FF_0^T \, |\, \supp f\ss \si\times(T-s,T]\,\}.
$$

For the set $\si\ss\Ga$ and $s>0$ we put
$$
	\OO^s_\si := \{ x\in \OO : \dist(x, \si) < s\}, \quad
	\OO^s := \OO^s_\Ga.
$$
Further we suppose that $T>0$ satisfies the condition
\begin{equation}
	\OO^T \ne \OO.
	\label{T_cond}
\end{equation}

Introduce Hilbert spaces for the set $\si\ss\Ga$ and $s>0$ 
\begin{equation}
    \UU_\si^s := \clos_{\vec L_2}\{\,\rot z\, |\, z\in \vec{C}^\infty(\OO),\,
       \supp z\ss\OO^s_\si \, \}, \quad
    \UU^s := \UU^s_\Ga.
    \label{Udef}
\end{equation}
(in the first definition the support $z$ may contain some part of $\Ga$,
since it may be contained in $\OO^s_\si$).
Every field $y \in \UU^s$ satisfies
\begin{equation}
	\div y = 0.
	\label{divUU}
\end{equation}
Here $\div$ is understood as generalized divirgence in $\OO$, and field $y$ is continued 
by zero outside of $\OO^s$.
Generally speaking, the space $\UU^s$ does not coincide (but is always a subspace of)
with the space of solenoidal fields supported in $\ov{\OO^s}$.
The reason is possible non-trivial topology of the set $\OO^s$.

Define $E^s_\si$ as an orthogonal projection on $\UU^s_\si$ acting in $\UU^T$.
Let $E^s_\si = E^T_\si$ for $s \ge T$ and $E^s_\si = 0$ for $s\le 0$.
In some cases we will consider $E^s_\si$ as projections on $\UU^s_\si$,
acting in the space $\vec L_2(\OO)$ meaning standard imbedding of $\UU^T$ into $\vec L_2(\OO)$.

For electromagnetic wave corresponding to boundary control $f \in \FF^{T, s}_{0, \si}$
for any $t\in(0,T]$ we have 
$$
	e(\cdot,t), h(\cdot,t) \in \UU^{\,t}_\si.
$$
Indeed, the problem~\re{Maxwell} desribes waves propagating with unit speed, hence
the boundary control $f$ acting during time $t$ generates the field supported in $\OO^t_\si$.
Equalities
$$
    e(\cdot, t) = \rot z,\quad
    h(\cdot, t) = \rot z'
$$
for $z, z'$ supported in $\OO^s_\si$,
required in the definition of $\UU^{\,t}_\si$,
can be obtained by integrating the Maxwell equations by time.

Associate with the problem~\re{Maxwell} the {\em control operator} $W^T: f\mapsto e(\cdot, T)$,
acting from the space of controls $\FF^T$ to the space $\UU^T$.
This operator is correctly defined in $\FF^T_0$.
The control operator is unbounded (see~\cite{BG00}) and has closure,
which will still be denoted by $W^T$.

The equality
\begin{equation}
    \clos_{\UU^T} W^T \FF_0^T = \UU^T
    \label{controlable}
\end{equation}
proved in~\cite{BG00} is called {\em approximate controllability} of the system~\re{Maxwell}.
Proof of this fact is based on a vector version of Holmgren-John-Tataru unique continuation theorem
(see \cite{Tat}).
Approximate controllability plays an important role in the dynamical version of the BC-method.

For an open set $\si\ss\Ga$ we have analogue of the equality~\re{controlable}:
\begin{equation}
    \clos_{\UU^T} W^T \FF_{0,\si}^{T,s} = \UU^s_\si.
    \label{controlables}
\end{equation}

\subsection{Operator eikonals}
For an open set $\si\ss\Ga$ we define an {\em operator eikonal}
as element of $B(\UU^T)$:
$$
	I^T[\si] := \int_{[0,T]} (T - s)\, d E^s_\si \, \ge 0.
$$
Let $\AA \ss B(\UU^T)$ be a $C$*-subalgebra, generated by
operator eikonals $I^T[\si]$, where $\si$ ranges over all open subsets of $\Ga$.
Projections $E^s_\si$ corresponding to different sets $\si$ do not commute,
so the same concerns operator eikonals.
Hence $C$*-algebra $\AA$ is non-commutative.

However, we establish the following fact.
\begin{T}
	\label{thetheorem_alg}
	$C$*-algebra $\AA / (\AA \cap\, \KK(\UU^T))$ is commutative and is isomorphic to
	\mbox{$\cir{C}(\OO^T)$.}
\end{T}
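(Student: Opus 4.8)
The plan is to show that, modulo $\KK(\UU^T)$, each operator eikonal is a multiplication operator by an explicit Lipschitz function, and then to apply the Stone--Weierstrass theorem. Denote by $P$ the orthogonal projection of $\vec L_2(\OO)$ onto $\UU^T$, and for $\phi\in C(\ov{\OO^T})$ let $A_\phi:=PM_\phi|_{\UU^T}\in B(\UU^T)$ be the compression to $\UU^T$ of multiplication by $\phi$. Put $\tau^T_\si(x):=(T-\dist(x,\si))_+$; this is a Lipschitz function on $\ov{\OO^T}$ belonging to $\cir{C}(\OO^T)$, since $\dist(x,\si)\ge\dist(x,\Ga)$ forces $\tau^T_\si$ to vanish on the inner boundary $\{\dist(\cdot,\Ga)=T\}$.

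First I would rewrite the eikonal. As $\{E^s_\si\}_s$ is a monotone family of projections with $E^0_\si=0$, integration by parts yields
\[
	I^T[\si]=\int_{[0,T]}(T-s)\,dE^s_\si=\int_0^T E^s_\si\,ds.
\]
The key lemma, and the main obstacle, is the identification
\[
	I^T[\si]-A_{\tau^T_\si}\in\KK(\UU^T).
\]
Since $\int_0^T\chi_{\OO^s_\si}(x)\,ds=\tau^T_\si(x)$ pointwise, this reduces to the compactness of $\int_0^T\bigl(E^s_\si-A_{\chi_{\OO^s_\si}}\bigr)\,ds$. The difficulty is that $E^s_\si$ projects onto \emph{solenoidal} fields supported in $\OO^s_\si$, whereas $A_{\chi_{\OO^s_\si}}$ merely truncates: their defect is a gradient correction solving a Hodge/elliptic problem whose source is concentrated on the interface $\{\dist(\cdot,\si)=s\}$. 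I expect to control it through the pseudodifferential structure of $P$ (a zero-order projection whose principal symbol commutes with scalar multiplication), which gives $[P,M_\phi]\in\KK(\UU^T)$ for $\phi\in C(\ov{\OO^T})$; the averaging over $s$ then converts the sharp cutoffs $\chi_{\OO^s_\si}$ into the Lipschitz multiplier $\tau^T_\si$, and it is precisely this smoothing by the $s$-integration that renders the interface contributions compact.

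Granting the lemma, the algebraic conclusions follow formally. From $[P,M_\phi]\in\KK(\UU^T)$ one computes
\[
	A_\phi A_\psi-A_{\phi\psi}\in\KK(\UU^T),\qquad A_\phi^*=A_{\ov\phi},
\]
so that $\Phi(\phi):=A_\phi+\KK(\UU^T)$ is a $*$-homomorphism from $\cir{C}(\OO^T)$ into $B(\UU^T)/\KK(\UU^T)$ with commutative image. Because $\AA$ is generated by the eikonals $I^T[\si]$, whose images are $\Phi(\tau^T_\si)$, the algebra $\AA$ is commutative modulo $\KK(\UU^T)$. Applying \re{homfactor} to the restriction to $\AA$ of the canonical homomorphism $q:B(\UU^T)\to B(\UU^T)/\KK(\UU^T)$, whose kernel on $\AA$ is $\AA\cap\KK(\UU^T)$, identifies $\AA/(\AA\cap\KK(\UU^T))$ with $q(\AA)$, i.e.\ with the $C$*-subalgebra generated by $\{\Phi(\tau^T_\si)\}=\Phi(\BB_0)$, where $\BB_0\ss\cir{C}(\OO^T)$ is the $C$*-subalgebra generated by the functions $\tau^T_\si$.

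It remains to show $\BB_0=\cir{C}(\OO^T)$ and that $\Phi$ is injective. For the first, I would use the geometric fact that the $\tau^T_\si$ separate points of $\OO^T$ (for $x\ne x'$ in the layer there is an open $\si\ss\Ga$ with $\dist(x,\si)\ne\dist(x',\si)$) together with $\tau^T_\Ga>0$ on $\OO^T$; passing to the one-point compactification of $\OO^T$ and applying the Stone--Weierstrass theorem then gives $\BB_0=\cir{C}(\OO^T)$. For injectivity, suppose $\phi\ne0$; multiplying by a suitable unimodular constant we may assume $\Re\phi\ge\de>0$ on some ball $B\ss\OO^T$. Choosing an orthonormal sequence $y_n=\rot z_n\in\UU^T$ with $z_n$ supported in $B$ (these lie in $\UU^T$ by \re{Udef} and satisfy $Py_n=y_n$), one gets $\Re(A_\phi y_n,y_n)_\OO=\int_B\Re\phi\,|y_n|^2\ge\de$, whence $A_\phi$ is not compact and $\Phi(\phi)\ne0$. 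Being an injective $*$-homomorphism, $\Phi$ is isometric onto its image, so
\[
	\AA/(\AA\cap\KK(\UU^T))\simeq\Phi(\cir{C}(\OO^T))\simeq\cir{C}(\OO^T),
\]
which is commutative. The whole argument hinges on the identification lemma of the second paragraph, where the interplay between the Hodge projection and the sharp spatial cutoffs must be resolved.
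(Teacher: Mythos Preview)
Your overall architecture matches the paper's: identify each eikonal with a compressed multiplication modulo compacts, deduce that $\Phi(\phi)=A_\phi+\KK(\UU^T)$ is a $*$-homomorphism, apply Stone--Weierstrass on the one-point compactification of $\OO^T$, and check injectivity. Your injectivity argument (orthonormal rotors supported in a small ball) is a slight variant of the paper's but works; the Stone--Weierstrass step is the same.

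The genuine gap is the key lemma, and your heuristic for it misidentifies the difficulty. You invoke the pseudodifferential structure of $P$ to get $[P,M_\phi]\in\KK$ for continuous $\phi$; that is one of the two ingredients the paper uses (its Lemma~\ref{Eff}, proved via local Helmholtz decomposition and Rellich compactness). But this says nothing about $E^s_\si-A_{\chi_{\OO^s_\si}}$: the projection $E^s_\si$ is onto $\UU^s_\si\subsetneq\UU^T$, not onto $\UU^T$, and is \emph{not} of the form $PM_{\chi}P$. Its defect from the cutoff is a Hodge correction on the \emph{variable} domain $\OO^s_\si$, and commutator compactness for the fixed projection $P$ does not touch it. So ``$[P,M_\phi]$ compact $+$ averaging in $s$'' does not close the argument.

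The paper treats this separately and it is the heart of the proof (Lemma~\ref{If}). It works with $K^T_\si:=\int_0^T(X^s_\si-E^s_\si)\,ds$, where $X^s_\si$ is the \emph{raw} cutoff (no projection), and shows that $K^T_\si$ maps $\UU^T$ boundedly into the space
\[
F=\{u\in\vec L_2(\OO):\div u\in L_2,\ \rot u\in\vec L_2,\ u_\te|_\Ga=0\},
\]
whose embedding into $\vec L_2(\OO)$ is compact. The $\div$ bound is easy ($\div(\ti\tau^T[\si]u)=\<\n\ti\tau^T[\si],\ov u\>$); the $\rot$ bound and boundary condition come from a duality trick: for $z\in\vec C^\infty(\OO)$ one proves the identity
\[
\|K^T_\si\rot z\|_\OO^2=2\,(K^T_\si\rot z,\n\tau[\si]\times z)_\OO,
\]
using that $(X^s_\si-E^s_\si)\rot z\perp\UU^s_\si$ and integrating by parts against the Lipschitz weight $(s-\tau[\si])_+$. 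This gives $\|K^T_\si\rot z\|_\OO\le C\|z\|_\OO$, and by duality $\rot(K^T_\si u)\in\vec L_2$ with the required bound together with $(K^T_\si u)_\te|_\Ga=0$. Only \emph{after} this is your commutator fact used, to pass from $M_{\ti\tau^T[\si]}$ to $A_{\ti\tau^T[\si]}$. So the ``smoothing by $s$-integration'' you anticipate is real, but it materializes as a gain of regularity (membership in $F$), not as a consequence of $[P,M_\phi]\in\KK$.
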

Note, that the intersection $\AA \cap\, \KK(\UU^T)$ is clearly a closed ideal in $\AA$, so
$\AA / (\AA \cap\, \KK(\UU^T))$ is $C$*-algebra.
It is unknown whether $\AA$ contains the whole of $\KK(\UU^T)$ or not.

In section~\ref{algebras_sec} we noted that the spectrum of $C$*-algebra of continuous functions
$\cir{C}(X)$ is homeomorphic to $X$.
Therefore, Theorem~\ref{thetheorem_alg} has the following consequence.
\begin{Corollary}
	The spectrum of commutative $C$*-algebra $\AA / (\AA \cap\, \KK(\UU^T))$
	is homeomor\-phic to the space $\OO^T$.
	\label{FactorSpectrum}
\end{Corollary}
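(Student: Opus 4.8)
The plan is to obtain the Corollary directly from Theorem~\ref{thetheorem_alg} by invoking two general facts recalled in section~\ref{algebras_sec}: that the spectrum is functorial, carrying an isomorphism of commutative $C$*-algebras to a homeomorphism of their spectra, and that $\OO_{\cir{C}(X)}$ is homeomorphic to $X$ whenever $X$ is locally compact Hausdorff. Writing $\BB := \AA/(\AA\cap\KK(\UU^T))$, Theorem~\ref{thetheorem_alg} supplies a $C$*-algebra isomorphism $\pi : \BB \to \cir{C}(\OO^T)$. I would then build the sought homeomorphism $\OO_\BB \cong \OO^T$ as the composition of a homeomorphism $\OO_{\cir{C}(\OO^T)} \cong \OO_\BB$ induced by $\pi$ with the standard identification $\OO^T \cong \OO_{\cir{C}(\OO^T)}$.

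For the induced map I would use the dual (pullback) of $\pi$. Given a nonzero multiplicative functional $l \in \OO_{\cir{C}(\OO^T)}$, set $\pi^*(l) := l\circ\pi$. Since $\pi$ is a homomorphism and $l$ is multiplicative, $l\circ\pi$ is again multiplicative on $\BB$, and it is nonzero because $\pi$ is surjective and $l\ne 0$; hence $\pi^*$ maps $\OO_{\cir{C}(\OO^T)}$ into $\OO_\BB$. Applying the same construction to $\pi^{-1}$ yields $(\pi^{-1})^*$ in the opposite direction, and the compositions $\pi^*\circ(\pi^{-1})^*$ and $(\pi^{-1})^*\circ\pi^*$ are the respective identities, so $\pi^*$ is a bijection. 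The point requiring care is continuity for the *-weak topologies: since the *-weak topology is that of pointwise convergence on the algebra, if $l_\al \to l$ pointwise on $\cir{C}(\OO^T)$ then for each fixed $b\in\BB$ one has $(l_\al\circ\pi)(b) = l_\al(\pi(b)) \to l(\pi(b)) = (l\circ\pi)(b)$, i.e. $\pi^*(l_\al)\to\pi^*(l)$ pointwise on $\BB$; the same argument applied to $(\pi^{-1})^*$ gives continuity of the inverse, so $\pi^*$ is a homeomorphism.

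For the second factor I would first check that the cited identification applies to $X=\OO^T$. The layer $\OO^T = \{x\in\OO : \dist(x,\Ga) < T\}$ is the preimage of $(-\infty, T)$ under the continuous function $\dist(\cdot,\Ga)$, hence open in the compact Hausdorff manifold $\OO$; an open subset of a compact Hausdorff space is locally compact Hausdorff. Thus the recalled fact yields a homeomorphism $\OO^T \to \OO_{\cir{C}(\OO^T)}$, $x\mapsto l_x$ with $l_x(f)=f(x)$, and composing with $\pi^*$ produces the desired homeomorphism $\OO^T \cong \OO_\BB$. The substantive content lies entirely in verifying that $\pi^*$ is a *-weak homeomorphism, which I expect to be the main step to record cleanly; this is routine once one recalls that the *-weak topology is pointwise convergence, so the obstacle is bookkeeping rather than genuine difficulty, while the local compactness of $\OO^T$ is immediate from its openness in $\OO$.
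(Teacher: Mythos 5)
Your proposal is correct and takes essentially the same route as the paper: the paper obtains the Corollary immediately from Theorem~\ref{thetheorem_alg} combined with the fact recalled in section~\ref{algebras_sec} that the spectrum $\OO_{\cir{C}(X)}$ is homeomorphic to $X$ for locally compact Hausdorff $X$, which is exactly your composition $\OO^T \cong \OO_{\cir{C}(\OO^T)} \cong \OO_{\AA/(\AA\cap\KK(\UU^T))}$. The only difference is that you write out the routine verifications the paper leaves implicit --- that the pullback $\pi^*$ of the isomorphism is a bijection continuous in both directions for the *-weak topologies, and that $\OO^T$ is locally compact Hausdorff as an open subset of the compact manifold $\OO$ --- and these verifications are carried out correctly.
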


The proof of Theorem~\ref{thetheorem_alg} is given in sections~\ref{I_sec}--\ref{algebra}.
The idea of proof is as follows.
Instead of the space $\UU^T$ and operator eikonals $I^T[\si]$ 
consider the space $\vec L_2(\OO^T)$ and operators
\begin{equation}
	\int_{[0,T]} (T - s)\, d X^s_\si,
	\label{simpleint}
\end{equation}
where $X^s_\si$ is an operator that multiplicates by indicator function of $\OO^s_\si$.
It can easily be shown that these integral operators coincide with operators of multiplication by
functions $\max\{ T - \dist(x, \si), 0 \}$.
These operators are self-adjoint and commute with each other, so they generate a commutative $C$*-subalgebra
$\wti\AA$ in $B(\vec L_2(\OO^T))$.
Clearly there is isomorphism between functional $C$*-algebra $\cir{C}(\OO^T)$
and $C$*-subalgebra of $B(\vec L_2(\OO^T))$ consisting
of operators of multiplication by functions from $\cir{C}(\OO^T)$.
This isomorphism maps $C$*-algebra $\wti\AA$ onto $C$*-subalgebra
generated by functions $\max\{ T - \dist(x, \si), 0 \}$.
The latter coincides with $\cir{C}(\OO^T)$~-- this is Lemma~\ref{minTau},
which is proved with the use of Stone-Weierstrass theorem.
Then $\wti\AA$ is isomorphic to $\cir{C}(\OO^T)$.
The proof of Theorem~\ref{thetheorem_alg} is based on the observation that
operators $I^T[\si]$ differ from the operators~\re{simpleint} restricted to the space $\UU^T$ by
compact operators (Lemma~\ref{If}).
From this we conclude that $C$*-algebras $\AA$ and $\wti\AA$ are isomorphic
``up to compact operators'', and then isomorphism $\wti\AA \simeq \cir{C}(\OO^T)$
completes the proof.

Operator eikonals~\re{simpleint} arise in the inverse problem for scalar wave equation.
In~\cite{GeomRings} $\OO^T$ was recovered as a spectrum of commutative $C$*-algebra generated by these operators.
In case of wave equation on graph corresponding operator eikonals generate non-commu\-ta\-tive $C$*-algebra
having very complicated structure, and
there is no analogue of Theorem~\ref{thetheorem_alg}.
Description of the spectrum of this $C$*-algebra is known only for some special graphs.

\subsection{Inverse problem}
\label{subIP}
Here we describe the procedure of recovering of the layer $\OO^T$ as topological space
by $R^{2T}$, based on Theorem~\ref{thetheorem_alg}.

By Theorem~\ref{thetheorem_alg}
the space $\OO^T$ can be obtained as spectrum of commutative $C$*-algebra $\AA / (\AA \cap\, \KK(\UU^T))$.
Although we can not obtain $C$*-algebra $\AA$ and the space $\UU^T$ by the data of the inverse problem,
BC-method allows us to construct a {\em model} Hilbert space $\UU^T_\#$
and $C$*-algebra $\AA_\# \ss B(\UU^T_\#)$ isomorphic to $\AA$ by operator $R^{2T}$.
The spectrum of commutative $C$*-algebra $\AA_\# / (\AA_\# \cap\, \KK(\UU^T_\#))$
is homeomorphic to $\OO^T$.

Construction of model objects $\UU^T_\#$ and $\AA^T_\#$ is based on
{\em connecting form} $c^T$.
This form is defined on the set $\Dom W^T$ as follows:
$$
    c^T[f,f'] := (W^T f, W^T f')_{J^T_\eps},\quad f,f'\in \Dom W^T. 
$$

The following result obtained in \cite{BG00} shows the relation between $c^T$ and the response operator
$R^{2T}$. To formulate it we introduce the operator of odd continuation of controls from the interval
$[0,T]$ to $[0,2T]$:
$$
    (S^T f)(\cdot, t) := \left\{
        \begin{array}{ll} f(\cdot, t), & 0\le t<T,\\
                -f(\cdot, 2T-t), & T\le t\le 2T.
        \end{array} \right.
$$
and define one more set of controls
$$
    \wt\FF^T_0 := \{f\in \FF^T_0\,|\,S^T f\in \FF^{2T}_0\}.
$$
\begin{Prop}
    For any $f\in \wt\FF^T_0$, $f'\in\FF^T_0$ we have
    \begin{equation}
        c^T[f,f'] = \frac{1}{2} \left((S^T)^* R^{2T} S^T f, f'\right)_{\FF^T}.
        \label{ctfg}
    \end{equation}
\end{Prop}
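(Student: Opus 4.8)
The plan is to reduce the identity to two energy balances linked by a reflection in time. Write $F := S^T f$ and $F' := S^T f'$, and let $(E,H)$ and $(E',H')$ be the classical solutions of \eref{Maxwell} on $\OO\times[0,2T]$ with these controls; they exist because $f\in\wt\FF^T_0$ forces $F\in\FF^{2T}_0$, and similarly for $F'$. Since $F=f$ and $F'=f'$ on $[0,T]$, these solutions restrict to the physical ones there, so $E(\cdot,T)=W^Tf$, $E'(\cdot,T)=W^Tf'$ and $c^T[f,f']=(E(\cdot,T),E'(\cdot,T))_\OO$ (the inner product in $c^T$ being the $\vec L_2(\OO)$ product of electric fields, as $\er=1$). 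Since $((S^T)^*R^{2T}S^Tf,f')_{\FF^T}=(R^{2T}F,F')_{\FF^{2T}}$, the claim becomes $(R^{2T}F,F')_{\FF^{2T}}=2\,(E(\cdot,T),E'(\cdot,T))_\OO$.

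First I would turn the boundary pairing into a volume integral. Using $R^{2T}F=-\nu\times H_\te$ and the scalar triple product, $(R^{2T}F,F')_{\FF^{2T}}=\int_0^{2T}\!\int_\Ga\<\nu,E'\times H\>\,d\Ga\,dt$. Applying the divergence theorem (inward normal) together with \eref{divuv} (keeping the conjugates where that identity dictates), inserting $\rot H=E_t$ and $\rot E'=-H'_t$, and integrating by parts in $t$ with the help of the vanishing initial data, I obtain the symmetric energy balance
\begin{equation*}
(R^{2T}F,F')_{\FF^{2T}}+(R^{2T}F',F)_{\FF^{2T}}=(E(\cdot,2T),E'(\cdot,2T))_\OO+(H(\cdot,2T),H'(\cdot,2T))_\OO.
\end{equation*}

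Because $F$ is odd about $t=T$, the reflected fields $\wt E(\cdot,t):=-E(\cdot,2T-t)$, $\wt H(\cdot,t):=H(\cdot,2T-t)$ again solve \eref{Maxwell} with the same control $F$ (and likewise for the primed data). Pairing $(E,H)$ against $(\wt E',\wt H')$ in the same Green's-formula computation, and noting that this pairing vanishes at $t=0$ and at $t=2T$, gives $(R^{2T}F,F')_{\FF^{2T}}=(R^{2T}F',F)_{\FF^{2T}}$, so the left side of the balance equals $2\,(R^{2T}F,F')_{\FF^{2T}}$. Next, the difference $(W,V):=(E-\wt E,H-\wt H)$ solves \eref{Maxwell} with homogeneous boundary data $W_\te|_\Ga=0$, hence is a free, energy-conserving evolution, with $W(\cdot,0)=E(\cdot,2T)$, $V(\cdot,0)=-H(\cdot,2T)$ and, crucially, $W(\cdot,T)=2E(\cdot,T)$, $V(\cdot,T)=0$ --- the magnetic part cancels at the midpoint because reflection fixes $H$ but reverses $E$. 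Conservation of the sesquilinear energy of $(W,V)$ against $(W',V')$ between $t=0$ and $t=T$ yields
\begin{equation*}
(E(\cdot,2T),E'(\cdot,2T))_\OO+(H(\cdot,2T),H'(\cdot,2T))_\OO=4\,(E(\cdot,T),E'(\cdot,T))_\OO,
\end{equation*}
and combining the three displays gives $2\,(R^{2T}F,F')_{\FF^{2T}}=4\,(E(\cdot,T),E'(\cdot,T))_\OO$, which is the assertion.

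The main obstacle is the rigorous justification of the Green's formulas and the time integrations by parts at the regularity at hand: one must ensure the trace $H_\te|_\Ga$ is well defined and that the boundary terms at $t=0,\,2T$ are controlled, which is exactly where the smoothness encoded in $\wt\FF^T_0$ (guaranteeing $F,F'\in\FF^{2T}_0$ and classical solutions) enters. Conceptually the heart of the argument is the reflection step: the odd extension $S^T$ is engineered so that at the midpoint the reflected solution shares the magnetic field and reverses the electric field, forcing the difference to be purely electric there; this is precisely why the connecting form $c^T$, built only from the electric component $W^T$, appears with the factor $\frac{1}{2}$.
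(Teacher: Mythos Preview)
The paper does not supply its own proof of this proposition; it quotes the result from \cite{BG00} and credits the underlying idea to Blagoveshchenskii. So there is no in-paper argument to compare against. Your route --- a Poynting/Green's balance for the symmetric bilinear energy, combined with the reflection trick keyed to the odd extension $S^T$ --- is a legitimate and standard way to obtain identities of this type, and the computations you sketch are correct when \emph{both} controls lie in $\wt\FF^T_0$.

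There is, however, one genuine gap. You assert that the classical solution $(E',H')$ exists on $\OO\times[0,2T]$ ``similarly for $F'$'', but the hypothesis on $f'$ is only $f'\in\FF^T_0$, not $f'\in\wt\FF^T_0$. Consequently $F'=S^Tf'$ need not be smooth across $t=T$ and may fail to lie in $\FF^{2T}_0$, so the $2T$-problem with control $F'$ need not have a classical solution. Every subsequent step that invokes $(E',H')$ on $[T,2T]$ --- the symmetric energy balance, the reflected pair $(\wt E',\wt H')$, and the conserved bilinear energy of $(W',V')$ --- is therefore unjustified as written. The repair is easy: run the argument first for $f'\in\wt\FF^T_0$, where everything goes through verbatim, and then extend to $f'\in\FF^T_0$ by approximation (multiply $f'$ by a smooth cutoff vanishing near $t=T$; for smooth data the solution at time $T$ depends continuously on the control, so both sides of \eqref{ctfg} pass to the limit). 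Alternatively, avoid extending $f'$ altogether and work with the one-dimensional wave equation satisfied by $(s,t)\mapsto (e(\cdot,s),e'(\cdot,t))_\OO-(h(\cdot,s),h'(\cdot,t))_\OO$ on $[0,T]^2$; that version, closer to Blagoveshchenskii's original derivation, uses only the $T$-solution for $f'$.
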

Identities of this type were first obtained by A.S.~Blagoveshchenskii.
Equalities~\re{ctfg} implies the following fact. 
\begin{Prop}
	Operator $|W^T| = ((W^T)^* W^T)^{1/2}$ is uniquely determined by $R^{2T}$.
\end{Prop}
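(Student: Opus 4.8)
The plan is to recognize the connecting form $c^T$ as the quadratic form of the nonnegative self-adjoint operator $|W^T|^2 = (W^T)^* W^T$, to read this form off from $R^{2T}$ via the preceding Blagoveshchenskii-type identity~\re{ctfg}, and then to recover $|W^T|$ as the positive square root through the continuous functional calculus. The essential point is that $|W^T|$ acts in the exterior space $\FF^T$, which is fixed and known a priori (it depends only on the boundary $\Ga$, not on the unknown interior of $\OO$); hence the assertion ``uniquely determined by $R^{2T}$'' is the statement that this operator in an already-known space is computable from the data.

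First I would note that, by the definition of the connecting form,
$$
c^T[f,f'] = (W^T f, W^T f')_{\UU^T} = ((W^T)^* W^T f, f')_{\FF^T} = (|W^T|^2 f, f')_{\FF^T}
$$
for $f, f' \in \Dom W^T$, so $c^T$ is precisely the sesquilinear form of $|W^T|^2$. The preceding Proposition then gives, for $f \in \wt\FF^T_0$ and $f' \in \FF^T_0$,
$$
c^T[f,f'] = \tfrac12\left((S^T)^* R^{2T} S^T f, f'\right)_{\FF^T},
$$
whose right-hand side involves only the known data $R^{2T}$, the explicit odd-continuation operator $S^T$, and the inner product of the fixed space $\FF^T$. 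Thus the form $c^T$ is available from $R^{2T}$ on the dense set $\wt\FF^T_0 \times \FF^T_0$; in particular, restricting to $f = f' \in \wt\FF^T_0$ recovers the quadratic form $c^T[f,f] = \|W^T f\|^2_{\UU^T}$ for all $f \in \wt\FF^T_0$ directly from the data.

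The decisive remaining step is to pass from these form values on a dense set to the operator itself. Here I would argue that $\wt\FF^T_0$ is a form core for $|W^T|^2$: since $W^T$ is closed and $\wt\FF^T_0$ is dense in $\Dom W^T$ in the graph norm, the closure of the nonnegative form $f \mapsto \|W^T f\|^2$ restricted to $\wt\FF^T_0$ coincides with the full closed form, so its values on $\wt\FF^T_0$ determine the closed form and hence the nonnegative self-adjoint operator $|W^T|^2$ uniquely. Once $|W^T|^2$ is fixed, its unique nonnegative square root $|W^T|$ is determined by the functional calculus, which proves the claim.

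The main obstacle is exactly this last functional-analytic point: because $W^T$ is unbounded, one must verify that $\wt\FF^T_0$ — carved out of $\FF^T_0$ by the extra requirement that the odd continuation $S^T f$ remain smooth and compactly supported in $(0,2T)$ — is still large enough to be a form core, i.e.\ dense in the graph norm of $W^T$. If this density holds, uniqueness of $|W^T|$ follows at once; if the odd-extension constraint were to shrink the set too severely, one would recover the form only on a proper subspace and would need an additional closure or approximation argument (for instance regularizing in $t$ near $t=T$) to conclude.
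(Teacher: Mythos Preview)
Your approach is correct and matches the paper's: the paper does not give a detailed proof at all, but simply states the Proposition as an immediate consequence of the Blagoveshchenskii identity~\re{ctfg}, exactly along the lines you outline. Your treatment is in fact more careful than the paper's, since you explicitly flag the form-core issue for the unbounded $W^T$ that the paper passes over in silence.
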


We define the model space, which is a counterpart of $\UU^T$, as follows:
$$
	\UU^T_\# := \FF^T.
$$
Next define operator $\Phi^T: \UU^T_\# \to \UU^T$
as a factor in the polar decomposition
\begin{equation}
    W^T = \Phi^T |W^T|.
    \label{Polar}
\end{equation}
In~\cite{BG00} it was shown that if $T$ satisfies \re{T_cond} the kernel of $W^T$ is trivial.
Then we have
$$
    \clos_{\FF^T} \Ran |W^T| = (\Ker |W^T|)^\perp = (\Ker W^T)^\perp = \FF^T.
$$
Combined with~\re{controlable} this implies that the operator $\Phi^T$ is unitary.

One can say that a pair
\begin{equation}
    \{\, \UU^T_\#,\, |W^T| \, \}
    \label{themodel}
\end{equation}
forms a model of the dynamical system in consideration.
Note, that using a model system instead of the system itself is traditional in BC-method.

Further we need model subspaces ($\si\ss\Ga$)
\begin{equation}
    \UU^s_{\si\#} := (\Phi^T)^* \UU^s_\si \ss \UU^T_\#. 
    \label{Umodel}
\end{equation}
In the next computation we use equalities \re{controlables}, \re{Polar}
($\si$ is an open subset of $\Ga$):
\begin{equation}
    \UU^s_{\si\#} = (\Phi^T)^* \UU^s_\si = (\Phi^T)^* \clos_{\UU^T} W^T \FF^{\,T,s}_{0,\si} =
    \clos_{\FF^T} |W^T| \FF^{\,T,s}_{0,\si}.
    \label{Ussimodel}
\end{equation}
Note one important thing: since the operator $|W^T|$ is determined by the data of the inverse problem,
the last relation shows that these data determine subspaces $\UU^s_{\si\#}$.

Using orthogonal projections $E^s_{\si\#}$ on $\UU^s_{\si\#}$ acting in $\UU^T_\#$
we can obtain operators 
$$
	I^T_\#[\si] := \int_{[0,T]} (T - s)\, d E^s_{\si\#},
$$
which are unitarily equivalent to $I^T[\si]$,
since $E^s_{\si\#} = (\Phi^T)^* E^s_\si \Phi^T$ and
$I^T_\#[\si] = (\Phi^T)^* I^T[\si] \Phi^T$.
Clearly that $C$*-algebra $\AA_\#\ss B(\UU^T_\#)$ generated by such operators for different open $\si$
is isomorphic to $\AA$, and so by Theorem~\ref{thetheorem_alg} we obtain that $C$*-algebra
$\AA_\# / (\AA_\# \cap \KK(\UU^T_\#))$ is commutative and
$$
	\AA_\# / (\AA_\# \cap \KK(\UU^T_\#)) \simeq C_0(\OO^T),
$$
hence its spectrum is homeomorphic to $\OO^T$.

We obtain the following result.
\begin{T}
    \label{thecorollary}
    Let $T>0$ satisfy \re{T_cond}.
    Then the operator $R^{2T}$ uniquely (up to homeomor\-phism) determines the topological space $\OO^T$.
\end{T}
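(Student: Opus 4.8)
The plan is to show that every object entering the construction of the quotient algebra $\AA_\# / (\AA_\# \cap \KK(\UU^T_\#))$ is expressible through $R^{2T}$ alone, and then to read off $\OO^T$ as its spectrum. First I would record the data-to-operator step: by the Blagoveshchenskii-type identity \re{ctfg} the connecting form $c^T$ is computed from $R^{2T}$, whence $|W^T| = ((W^T)^* W^T)^{1/2}$ is determined by $R^{2T}$ (the preceding Proposition). The model exterior space $\UU^T_\# = \FF^T$ is itself independent of the unknown manifold, and the control sets $\FF^{T,s}_{0,\si}$ are given explicitly. Hence, by \re{Ussimodel}, the subspaces $\UU^s_{\si\#} = \clos_{\FF^T} |W^T| \FF^{T,s}_{0,\si}$, their orthogonal projections $E^s_{\si\#}$, and the model eikonals $I^T_\#[\si] = \int_{[0,T]} (T-s)\, dE^s_{\si\#}$ are all recovered from $R^{2T}$. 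Consequently the generated $C$*-algebra $\AA_\# \ss B(\UU^T_\#)$, the ideal $\AA_\# \cap \KK(\UU^T_\#)$, and therefore the quotient $\AA_\# / (\AA_\# \cap \KK(\UU^T_\#))$, are determined by $R^{2T}$.

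Next I would identify this quotient up to isomorphism. Since the factor $\Phi^T$ in the polar decomposition \re{Polar} is unitary (here one uses the triviality of $\Ker W^T$ under \re{T_cond} together with approximate controllability) and $I^T_\#[\si] = (\Phi^T)^* I^T[\si]\, \Phi^T$, conjugation by the unitary $\Phi^T$ is an isomorphism carrying $\AA_\#$ onto $\AA$. A unitary conjugation maps $\KK(\UU^T_\#)$ onto $\KK(\UU^T)$, so it carries the ideal $\AA_\# \cap \KK(\UU^T_\#)$ onto $\AA \cap \KK(\UU^T)$ and descends to an isomorphism of quotients $\AA_\# / (\AA_\# \cap \KK(\UU^T_\#)) \simeq \AA / (\AA \cap \KK(\UU^T))$. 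By Theorem~\ref{thetheorem_alg} the right-hand side is isomorphic to $\cir{C}(\OO^T)$, so the model quotient is a commutative $C$*-algebra isomorphic to $\cir{C}(\OO^T)$.

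Finally I would pass to spectra. Isomorphic commutative $C$*-algebras have homeomorphic spectra, and the spectrum of $\cir{C}(X)$ is homeomorphic to $X$ (as recalled in Section~\ref{algebras_sec}). Hence the spectrum of $\AA_\# / (\AA_\# \cap \KK(\UU^T_\#))$ is homeomorphic to $\OO^T$. Combining this with the first step — that this quotient, and therefore its spectrum regarded purely as a topological space, is constructed from $R^{2T}$ — yields that $R^{2T}$ determines $\OO^T$ up to homeomorphism.

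I expect the only genuinely delicate point to be the bookkeeping of the first paragraph: one must check that no hidden reference to the unknown manifold enters the construction, that is, that the passage from $|W^T|$ to the subspaces $\UU^s_{\si\#}$, to the projections $E^s_{\si\#}$, to the integral defining $I^T_\#[\si]$, and to the norm-closure generating $\AA_\#$ uses only the abstract Hilbert space $\FF^T$ and the operator $|W^T|$. All the substantive analytic content — the identity \re{ctfg}, the unitarity of $\Phi^T$, and the commutativity-modulo-compacts of Theorem~\ref{thetheorem_alg} — has already been secured, so the remaining argument is an assembly rather than a new estimate.
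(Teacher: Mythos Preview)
Your proposal is correct and follows essentially the same route as the paper: the argument in Section~\ref{subIP} builds the model pair $\{\UU^T_\#,|W^T|\}$ from $R^{2T}$ via \re{ctfg}, obtains $\UU^s_{\si\#}$, $E^s_{\si\#}$, $I^T_\#[\si]$ and hence $\AA_\#$, invokes the unitarity of $\Phi^T$ to get $\AA_\#\simeq\AA$, and then applies Theorem~\ref{thetheorem_alg} to identify the quotient with $\cir{C}(\OO^T)$ and read off $\OO^T$ as its spectrum. Your explicit remark that conjugation by $\Phi^T$ carries $\KK(\UU^T_\#)$ onto $\KK(\UU^T)$ and therefore descends to the quotients is a point the paper leaves implicit but is indeed needed.
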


The scheme of recovering of $\OO^T$ may be displayed as follows:
$$
	R^{2T} \Rightarrow \{\, \UU^T_\#,\, |W^T| \, \} \Rightarrow
	\{ E^s_{\si\#} \} \Rightarrow \{ I^T_\#[\si] \} \Rightarrow
	\AA_\# \Rightarrow \OO_{\AA_\# / (\AA_\# \cap \KK(\UU^T_\#))}.
$$

\section{Basic lemma}
\label{I_sec}
Further we use the same notation for a continuous function from $\cir{C}(\OO^T)$
and the operator of multiplication by this function.

For $\si\ss\Ga$ define functions in $\OO^T$ as follows
$$
	\tau[\si](x) := \dist(x, \si), \quad
	\ti\tau^T[\si](x) := \max\, \{ T - \tau[\si](x),\, 0 \}.
$$
It is easy to check that these functions are Lipschitz and belong to $\cir{C}(\OO^T)$.

\begin{Lemma}
	For an arbitrary set $\si\ss\Ga$ we have
	$$
		\ti\tau^T[\si] - I^T[\si] \in \KK(\UU^T; \vec L_2(\OO^T)).
	$$
	\label{If}
\end{Lemma}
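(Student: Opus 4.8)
The plan is to reduce the operator eikonal to a time average of spectral projections and then to recognise the resulting integrand, on $\UU^T$, as a projection onto an irrotational subspace whose $s$-average gains a derivative. First I would integrate by parts in the Stieltjes integrals. Since $E^s_\si=0$ for $s\le 0$ and the weight $T-s$ vanishes at $s=T$, one obtains
$$ I^T[\si]=\int_{[0,T]}(T-s)\,dE^s_\si=\int_0^T E^s_\si\,ds . $$
The same computation, together with the pointwise identity $\int_0^T \1_{\OO^s_\si}(x)\,ds=\max\{T-\dist(x,\si),0\}$, gives $\ti\tau^T[\si]=\int_0^T X^s_\si\,ds$ (this is exactly the integral representation already noted for $X^s_\si$). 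Hence the object to control is
$$ \ti\tau^T[\si]-I^T[\si]=\int_0^T\bigl(X^s_\si-E^s_\si\bigr)\,ds , $$
regarded as an operator from $\UU^T$ into $\vec L_2(\OO^T)$.

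Next I would pin down the integrand on $\UU^T$. Fields of $\UU^s_\si$ are supported in $\OO^s_\si$, so for $w\in\UU^s_\si$ and $y\in\UU^T$ one has $(X^s_\si y,w)=(y,w)=(E^s_\si y,w)$; thus $E^s_\si y$ is precisely the orthogonal projection of $X^s_\si y$ onto $\UU^s_\si$, and therefore
$$ \bigl(X^s_\si-E^s_\si\bigr)y=\Pi^s_\si\, y , $$
where $\Pi^s_\si$ is the orthogonal projection onto the irrotational subspace $G^s_\si:=\vec L_2(\OO^s_\si)\ominus\UU^s_\si$ (one uses that $\Pi^s_\si=\Pi^s_\si X^s_\si$, as $G^s_\si\subset\vec L_2(\OO^s_\si)$). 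Since $u\perp\UU^s_\si$ exactly when $\rot u=0$ in $\OO^s_\si$ (from the identity \re{divuv}) and $\div y=0$ by \re{divUU}, the field $\Pi^s_\si y=\n p_s$ with $p_s$ harmonic in $\OO^s_\si$. The point is that the surface (measure) contributions which appear in a naive cut-off $\rot(\1_{\OO^s_\si}z)=\n\1_{\OO^s_\si}\times z+\1_{\OO^s_\si}\rot z$ cancel under the projection, so $\Pi^s_\si y$ is a genuine $\vec L_2$-field, with normal trace on the level surface $\Si_s:=\{x:\dist(x,\si)=s\}$ supplied by the flux $\langle y,\nu_s\rangle$ of $y$ through $\Si_s$.

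The remaining, and principal, step is to show that $\int_0^T\Pi^s_\si\,ds$ is compact. I would represent $\Pi^s_\si y=\n p_s$ by a single-layer type potential on $\Si_s$ driven by $\langle y,\nu_s\rangle$, and then apply the coarea formula: because $|\n\dist(\cdot,\si)|=1$, integrating the surface representation against $ds$ collapses $\int_0^T dS_{\Si_s}\,ds$ into the volume measure of $\OO^T$. This exhibits $\int_0^T\Pi^s_\si\,ds$ as an integral operator $y\mapsto\int_{\OO^T}K(\cdot,\xi)\,y(\xi)\,d\xi$ with a weakly singular kernel of order $-1$; equivalently it maps $\UU^T$ boundedly into $\vec H^1(\OO^T)$. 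Compactness into $\vec L_2(\OO^T)$ then follows from the Rellich embedding $\vec H^1(\OO^T)\hookrightarrow\vec L_2(\OO^T)$.

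The hard part will be exactly this last step. The level surfaces $\Si_s$ are only Lipschitz and move with $s$, so the uniform control of the harmonic extensions $p_s$ (and of the kernel $K$) along the foliation is delicate; moreover the topology of $\OO^s_\si$ may be nontrivial, which is precisely why $\UU^s_\si$ is a proper subspace of the solenoidal fields and why the splitting $\vec L_2(\OO^s_\si)=\UU^s_\si\oplus G^s_\si$ must be arranged carefully. Establishing the gain of one derivative after the $s$-average, uniformly in $\si$, is the crux of the argument.
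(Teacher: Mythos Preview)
Your reduction to $K^T_\si:=\int_0^T(X^s_\si-E^s_\si)\,ds$ via integration by parts, and your identification of $(X^s_\si-E^s_\si)y$ for $y\in\UU^T$ as the orthogonal projection of $X^s_\si y$ onto $\vec L_2(\OO^s_\si)\ominus\UU^s_\si$, are correct and coincide with the paper's setup.

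Where you diverge is in the compactness step. Your plan --- represent $\Pi^s_\si y$ by a single-layer potential on the moving level set $\Si_s$, then collapse the $s$-integral via coarea into a weakly singular volume kernel --- is heuristically right but runs into exactly the obstacles you flag: the surfaces $\Si_s$ are only Lipschitz and degenerate at the cut locus of $\si$, the domains $\OO^s_\si$ vary non-smoothly in $s$, and when $\OO^s_\si$ has nontrivial topology the complement $G^s_\si$ contains harmonic fields that are not gradients. Making the $H^1$ bound rigorous along this route would be a project in itself.

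The paper bypasses all of this with a duality argument. Rather than representing $K^T_\si y$ explicitly, it shows directly that $K^T_\si$ maps $\UU^T$ boundedly into the space
$$ F=\{u\in\vec L_2(\OO):\ \div u\in L_2(\OO),\ \rot u\in\vec L_2(\OO),\ u_\te|_\Ga=0\}, $$
and then invokes the compact embedding $F\hookrightarrow\vec L_2(\OO)$ (Proposition~\ref{FL2}). The $\div$ part is immediate: $\int_0^T E^s_\si\,ds\,y\in\UU^T$ is solenoidal, while $\int_0^T X^s_\si\,ds\,y=\ti\tau^T[\si]\,y$ has divergence $\langle\n\ti\tau^T[\si],\ov y\rangle\in L_2$ by \re{divfu}, since $\ti\tau^T[\si]$ is Lipschitz. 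For $\rot$ (and the boundary condition) the key is the identity
$$ \|K^T_\si\rot z\|_\OO^2 = 2\,(K^T_\si\rot z,\ \n\tau[\si]\times z)_\OO,\qquad z\in\vec C^\infty(\OO), $$
obtained by symmetrising the double $s,\xi$ integral, using that $(X^s_\si-E^s_\si)\rot z\perp\UU^s_\si$, and integrating by parts in $\OO^s_\si$ with the Lipschitz weight $s-\tau[\si]$ (formula \re{rotfu} turns this weight into $\n\tau[\si]\times z$). This yields $\|K^T_\si\rot z\|_\OO\le C\|z\|_\OO$, and then, by self-adjointness of $K^T_\si$ and testing against smooth $z$ not required to vanish near $\Ga$, one gets both $\|\rot(K^T_\si u)\|_\OO\le C\|u\|_\OO$ and $(K^T_\si u)_\te|_\Ga=0$ for every $u\in\vec L_2(\OO)$.

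So your intuition that the $s$-average gains one derivative is exactly right; the paper's point is that this gain is visible at the level of $\rot$ and $\div$ separately, by a short computation that never touches the level surfaces $\Si_s$ or the internal Hodge structure of $G^s_\si$.
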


The proof of Lemma~\ref{If} uses the fact that is usually used for proving of spectral asymptotics 
for Maxwell operator as well as in mathematical hydrodynamics.
It concerns compactness of imbedding of some space to $\vec L_2(\OO^T)$.
Before we introduce this space we give the following definition.
Let the field $z\in\vec L_2(\OO)$ satisfy $\rot z\in \vec L_2(\OO)$.
Following~\cite{Leis}, we say that the field $z$ satisfies the condition
\begin{equation}
	z_\te|_\Ga = 0,    \label{zte}
\end{equation}
if for any field $v\in\vec L_2(\OO)$, such that
$\rot v\in \vec L_2(\OO)$, we have
$$
	(z, \rot v)_\OO = (\rot z, v)_\OO.
$$
It can be shown,
that due to smoothness of the boundary $\Ga$ it is necessary to check this condition only for
$v\in \vec C^\infty(\OO)$.

Now we define the space
$$
	F := \{ u \in \vec L_2(\OO) : \div u \in L_2(\OO),\, \rot u \in \vec L_2(\OO),\, u_\te|_\Ga = 0\}
$$
with the norm
$$
	\|u\|_F^2 := \|u\|_\OO^2 + \|\div u\|_\OO^2 + \|\rot u\|_\OO^2.
$$
Due to the smoothness of the boundary $\Ga$
the following proposition holds true
(it is claimed for Euclidean domain in section~8.4 of~\cite{Leis},
and can easily be generalized for the case of smooth manifolds).
\begin{Prop}
	\label{FL2}
	The imbedding of the space $F$ to $\vec L_2(\OO)$ is compact.
\end{Prop}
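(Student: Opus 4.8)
The plan is to factor the embedding $F \hookrightarrow \vec L_2(\OO)$ through the Sobolev space $\vec H^1(\OO)$, exploiting the fact that the tangential condition \re{zte} upgrades the control of $\div u$ and $\rot u$ into control of all first derivatives of $u$. Concretely, I would first establish the Gaffney--Friedrichs estimate
$$
\|u\|_{\vec H^1(\OO)}^2 \le C\left(\|u\|_\OO^2 + \|\div u\|_\OO^2 + \|\rot u\|_\OO^2\right) = C\,\|u\|_F^2, \qquad u\in F,
$$
with $C$ depending only on $\OO$. Granting this, the inclusion $F\hookrightarrow\vec L_2(\OO)$ is the composition of the \emph{continuous} inclusion $F\hookrightarrow\vec H^1(\OO)$ with the Rellich--Kondrachov embedding $\vec H^1(\OO)\hookrightarrow\vec L_2(\OO)$, which is \emph{compact} since $\OO$ is a compact manifold with smooth boundary. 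As a composition of a bounded operator with a compact one is compact, the Proposition follows at once.

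To prove the estimate it suffices, after a standard density reduction, to treat $u\in\vec C^\infty(\OO)$ with vanishing tangential trace. For such $u$ the tool is the Bochner--Weitzenb\"ock identity for $1$-forms, which upon integration by parts takes the form
$$
\int_\OO |\n u|^2\, dV = \int_\OO \left(|\rot u|^2 + |\div u|^2\right) dV - \int_\OO \mathrm{Ric}(u, u)\, dV + \int_\Ga b(u)\, d\Ga,
$$
where the boundary density $b(u)$ is assembled from the second fundamental form of $\Ga$ and the components of $u$. Because $\OO$ is compact the curvature term is bounded in absolute value by $C\|u\|_\OO^2$ and is harmless. The delicate contribution is the boundary integral: invoking $u_\te|_\Ga=0$, the density $b(u)$ collapses to an expression in the normal component of $u$ and the mean curvature, which by the boundedness of the second fundamental form (smoothness of $\Ga$) together with a trace-interpolation inequality is dominated by $\eps\|u\|_{\vec H^1(\OO)}^2 + C_\eps\|u\|_F^2$. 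Choosing $\eps$ small and absorbing the first term into the left side of the Bochner identity yields the Gaffney estimate.

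Equivalently, and in line with the reference, one can bypass the global computation by localizing with a finite partition of unity subordinate to coordinate charts: in an interior chart the bound is the classical flat one, while in a boundary chart one straightens $\Ga$ and appeals to the Euclidean compactness statement of \cite{Leis}, the transfer to the manifold generating only additional first-order terms whose coefficients (the metric and its derivatives) are bounded on the compact $\OO$ and are therefore absorbed as lower-order. I expect the main obstacle to be exactly the control of the boundary term: this is where the hypothesis $u_\te|_\Ga=0$ is indispensable~-- the estimate is simply false without a boundary condition~-- and where the smoothness of $\Ga$ enters, through the boundedness of its second fundamental form. Once that term is tamed, the remaining steps are routine.
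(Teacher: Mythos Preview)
Your proposal is correct, and in fact the paper does not give its own proof of this proposition: it simply cites \cite{Leis}, \S8.4, for the Euclidean case and asserts that the extension to smooth manifolds is routine. Immediately after the statement the paper remarks that the stronger fact $F=\vec H^1(\OO)$ holds, which is precisely the Gaffney--Friedrichs estimate you set out to prove; so your approach is exactly the one the authors allude to but do not carry out.
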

Actually, the stronger fact holds true: the space $F$ coincides with vector Sobolev Space  $\vec H^1(\OO)$,
which compactly imbeds to $\vec L_2(\OO)$.
However, the Proposition~\ref{FL2} will be enough for us.

Now we describe the scheme of proof of Lemma~\ref{If}.
We obtain estimates for $L_2$-norms of $\rot$ and divergence of the difference
$\ti\tau^T[\si] u - I^T[\si] u$
by $L_2$-norm of $u\in\UU^T$ (inequalities \re{rot_est}, \re{div_est}),
and establish the boundary condition~\re{zte} on $\Ga$ for this difference.
This means that the field $\ti\tau^T[\si] u - I^T[\si] u$ belongs to $F$
with the corresponding estimate of norm,
which implies that operator $\ti\tau^T[\si] - I^T[\si]$ restricted to $\UU^T$ is compact
(by the compactness of imbedding $F$ to $\vec L_2$).

We need some more auxiliary facts.
For $h>0$ we have:
\begin{align*}
	& I^T[\si] = \int_{[-h,T+h]} (T - s)\, d E^s_\si =
	T E^T_\si - \int_{[-h,T+h]} s\, d E^s_\si = \\
	& T E^T_\si - (T+h)\, E^T_\si + \int_{[-h,T+h]} E^s_\si \, ds.
\end{align*}
Passing to a limit $h\to 0+$ we obtain that
\begin{equation}
	I^T[\si] = \int_0^T E^s_\si \, ds.
	\label{Imax}
\end{equation}

Let $X^s_\si$ be an operator of multiplication by the indicator function of the set $\OO^s_\si$,
acting in $\vec L_2(\OO^T)$.
The following equality holds true.
\begin{equation}
	X^s_\si (s - \tau[\si]) = \max\,\{s - \tau[\si], 0 \} =  \int_0^s X_\si^\xi \, d\xi.
	\label{tauint}
\end{equation}
Indeed, let $\chi_{\OO^\xi_\si}$ be an indicator function of the set $\OO^\xi_\si$,
for $u,v \in \vec L_2(\OO)$ we have
\begin{align*}
	&\int_0^s (X_\si^\xi u, v)_\OO \, d\xi =
	\int_{\OO\times[0,s]} dx\, d\xi\,\, \chi_{\OO^\xi_\si}(x)\, \<\,  u(x), v(x) \,\> = \\
	&\int_\OO dx\, \<\,  u(x), v(x) \,\> \int_0^s d\xi\, \chi_{\OO^\xi_\si}(x)  =
	\int_\OO dx\, \<\, u(x), v(x) \,\> \, \max\,\{s - \tau[\si](x), 0\}.
\end{align*}
In particular, it follows from \re{tauint} that
\begin{equation}
	\ti\tau^T[\si] = 
	\int_0^T X^s_\si \, ds.	
	\label{titauint}
\end{equation}
Together with~\re{Imax} this implies the relation
\begin{equation}
	(\ti\tau^T[\si] - I^T[\si])\, y = 
	\left(\int_0^T (X^s_\si - E^s_\si) \, ds\right) y, \quad y \in \UU^T.
	\label{diffKT}
\end{equation}
Define a family of operators
$$
	K^s_\si := \int_0^s (X^\xi_\si - E^\xi_\si) \, d\xi,\quad
	s \in [0,T].
$$
It is convinient to consider $K^s_\si$ as operators acting in the space
$\vec L_2(\OO)$, and suppose that fields from the range of $K^s_\si$ are continued by zero to $\OO\sm\OO^s_\si$.

Compactness of the restriction $K^T_\si|_{\UU^T}$ form $\UU^T$ to $\vec L_2(\OO)$
will prove Lemma~\ref{If}.

\begin{Lemma}
    \label{lemmaNz}
    Suppose $\si\ss\Ga$, $s\in (0,T]$
    and the field $\gr \in \vec{L}_2(\OO^s_\si)$
    is smooth in $\OO^s_\si$ (in particular, it is smooth on the boundary
    $\OO^s_\si \cap \Ga$) and is orthogonal to $\UU^s_\si$.
    Then for $z\in\vec{C}^\infty(\OO)$ we have
    $$
        (\gr, K^s_\si\, \rot z)_{\OO^s_\si} = (\gr, \n \tau[\si] \times z)_{\OO^s_\si}.
    $$
\end{Lemma}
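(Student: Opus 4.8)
The plan is to reduce the claimed identity to the single hypothesis $\gr\perp\UU^s_\si$, carried out in three steps. First I would use orthogonality to discard the projection part of $K^s_\si$. Since $\OO^\xi_\si\ss\OO^s_\si$ gives $\UU^\xi_\si\ss\UU^s_\si$ for every $\xi\le s$, and $\gr\perp\UU^s_\si$, each projection satisfies $E^\xi_\si\gr=0$, so by self-adjointness $(\gr,E^\xi_\si\rot z)_{\OO^s_\si}=(E^\xi_\si\gr,\rot z)_{\OO^s_\si}=0$ for all $\xi\in[0,s]$. Integrating in $\xi$ against the definition of $K^s_\si$, using $\int_0^s X^\xi_\si\,d\xi=\max\{s-\tau[\si],0\}$ from~\re{tauint}, and recalling that $\gr$ is supported in $\OO^s_\si$ (where this function equals $s-\tau[\si]$), the left-hand side collapses to
$$
	(\gr, K^s_\si\,\rot z)_{\OO^s_\si} = (\gr,\, (s-\tau[\si])\,\rot z)_{\OO^s_\si}.
$$

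Next I would apply the product rule~\re{rotfu} to the Lipschitz function $\phi:=\max\{s-\tau[\si],0\}$ and the smooth field $z$, which yields $\phi\,\rot z=\rot(\phi z)-\n\phi\times z$ on all of $\OO$. On $\OO^s_\si$ one has $\phi=s-\tau[\si]$ and hence $\n\phi=-\n\tau[\si]$, so after restricting to the support of $\gr$ this becomes
$$
	(\gr,\,(s-\tau[\si])\,\rot z)_{\OO^s_\si} = (\gr,\,\rot(\phi z))_{\OO^s_\si} + (\gr,\,\n\tau[\si]\times z)_{\OO^s_\si}.
$$
Comparing with the target identity, the proof is finished once the first term on the right is shown to vanish.

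For this I would argue that $\rot(\phi z)\in\UU^s_\si$, whereupon orthogonality kills it. The field $\phi z$ is Lipschitz, supported in $\clos\OO^s_\si$, and vanishes on the inner boundary $\{\tau[\si]=s\}$. To approximate it by fields admissible in~\re{Udef}, I would first replace $\phi$ by $\phi_\de:=\max\{s-\de-\tau[\si],0\}$, whose support stays away from $\{\tau[\si]=s\}$ and thus lies inside the relatively open set $\OO^s_\si$, and then mollify $\phi_\de z$ into a smooth field $w$ with $\supp w\ss\OO^s_\si$, so that $\rot w\in\UU^s_\si$ by definition. Passing to the limits (vanishing mollification radius, then $\de\to0$) gives $\rot w\to\rot(\phi z)$ in $\vec L_2$ by dominated convergence, since the gradients $\n\phi_\de$ are uniformly bounded; hence $\rot(\phi z)\in\UU^s_\si$ and $(\gr,\rot(\phi z))_{\OO^s_\si}=(\gr,\rot(\phi z))_\OO=0$.

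I expect this last step to be the main obstacle: the support of $\phi z$ may meet $\Ga$ along $\si$, so the mollification cannot be performed freely but must be carried out in boundary collar charts (extending across $\Ga$, mollifying, and restricting) in such a way that the approximants remain in $\vec C^\infty(\OO)$ with support inside $\OO^s_\si$. The remaining ingredients—the product rule and the vanishing of the projection terms—are routine. It is worth noting that the smoothness of $\gr$ plays no role in this particular identity; the only property driving all three steps is the orthogonality $\gr\perp\UU^s_\si$, equivalently $\gr\perp\UU^\xi_\si$ for every $\xi\le s$.
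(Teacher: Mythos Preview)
Your argument is correct, but it is organized differently from the paper's. Both proofs begin the same way: the orthogonality $\gr\perp\UU^\xi_\si$ for all $\xi\le s$ kills the projection part and reduces the left-hand side to $(\gr,(s-\tau[\si])\rot z)_{\OO^s_\si}$. From there the two arguments diverge. The paper moves the Lipschitz cutoff $h=\max\{s'-\tau[\si],0\}$ (with $s'<s$) onto $\gr$, uses the smoothness of $\gr$ to make $h\gr$ Lipschitz, and then integrates by parts on a region $U$ with smooth boundary satisfying $\ov{\OO^{s'}_\si}\ss U\ss\OO^s_\si$; the consequences $\rot\gr=0$ and $\gr_\te=0$ of the orthogonality make the boundary term vanish, and the limit $s'\to s$ closes the argument. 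You instead leave the cutoff on $z$, use \re{rotfu} for $\phi z$, and finish by showing $\rot(\phi z)\in\UU^s_\si$ through approximation by $\rot(\psi z)$ with $\psi$ smooth and $\supp\psi\ss\OO^s_\si$.

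Each route trades a technicality for another. The paper's version needs the smoothness of $\gr$ (to make $h\gr$ Lipschitz and to extract $\rot\gr=0$, $\gr_\te=0$) and the auxiliary limit $s'\to s$, but avoids any density/mollification argument. Your version bypasses the smoothness of $\gr$ entirely, as you observe, at the price of the approximation step near $\Ga$; that step is standard (mollify the Lipschitz scalar $\phi_\de$ in collar charts to get smooth $\psi$ with $\supp\psi\ss\OO^s_\si$, then $\rot(\psi z)\to\rot(\phi_\de z)$ in $\vec L_2$ since $\psi\to\phi_\de$ in $W^{1,2}$), and your caution about it is well placed but the obstacle is only technical.
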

\begin{proof}
    Let $0<s'<s$.
    By the absolute continuity of Lebesgue integral we have
    \begin{equation}
    	(\gr, K^{s'}_\si\, \rot z)_{\OO^{s'}_\si} \to
    	(\gr, K^s_\si\, \rot z)_{\OO^s_\si},
    	\quad s' \to s-0.
    	\label{Kslim}
    \end{equation}
    Clearly that $\gr$ is orthogonal to $\UU^\xi_\si$ for $\xi \le s$, therefore
    \begin{align*}
    	&(\gr, K^{s'}_\si\, \rot z)_{\OO^{s'}_\si} = 
    	\int_0^{s'} d\xi\, (\gr, (X^\xi_\si - E^\xi_\si)\, \rot z)_{\OO^\xi} = 
    	\int_0^{s'} d\xi\, (\gr, X^\xi_\si\, \rot z)_{\OO^\xi} =\\
    	&(\gr, (s'-\tau[\si])\, \rot z)_{\OO^{s'}_\si} =
    	((s'-\tau[\si])\, \gr, \rot z)_{\OO^{s'}_\si}
    \end{align*}
    (in the third equality we used~\re{tauint}).
    Choose an open set $U\ss\OO$ with smooth boundary,
    that contains $\ov{\OO^{s'}}$ and is contained with its closure in $\OO^s$.
    Set
    $$
    	h(x) := \max \{s'-\tau[\si](x), 0\}
    $$
    Then
    \begin{equation}
    	((s'-\tau[\si])\, \gr, \rot z)_{\OO^{s'}_\si} = 
    	(h \gr, \rot z)_U.
    	\label{hgrrot}
    \end{equation}
    The field $h\gr$ is Lipschitz, as function $h$ is Lipschitz,
    and the field $\gr$ is smooth in the neighborhood of $\supp h$,
    so we can apply a formula of integration by parts to the obtained inner product.
		Orthogonality of $\gr$ to the space $\UU^s_\si$ implies that
    \begin{equation}
    	\rot\gr \,|_{\OO^{s}_\si} = 0, \quad 
    	\gr_\te|_{\OO^{s}_\si\cap\,\Ga} = 0.
    	\label{rotte}
    \end{equation}
    The second equality implies that in integration by parts integral over $\dd U$ vanishes,
    since $\gr_\te = 0$ on $\dd U \cap \Ga$ and $h = 0$ on $\dd U \sm \Ga$ as $\dd U \sm \Ga\ss U \sm \OO^{s'}$.
    Applying the first equality in~\re{rotte} and formula~\re{rotfu} we obtain:
    \begin{align*}
    	&(h \gr, \rot z)_U =
    	(\rot (h \gr), z)_U =
    	(\n h\times \gr, z)_U =
    	((-\n \tau[\si])\times \gr, z)_{\OO_\si^{s'}} = \\
    	&(\gr, \n \tau[\si] \times z)_{\OO_\si^{s'}}.
    \end{align*}
    This tends to $(\gr, \n \tau[\si] \times z)_{\OO_\si^s}$ as $s' \to s$.
    Combined with~\re{Kslim} this yields the required equality.
\end{proof}

\begin{Lemma}
		Let $\si\ss\Ga$.
		For the field $z\in\vec{C}^\infty(\OO)$ we have
    \begin{equation}
        \label{krotkrot}
        (K^T_\si\, \rot z, K^T_\si\, \rot z)_\OO =
        2\,(K^T_\si\, \rot z, \n \tau[\si] \times z)_\OO.
    \end{equation}
\end{Lemma}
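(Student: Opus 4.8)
The plan is to expand the squared norm as a double integral over the parameter and to symmetrise it into two triangular pieces, each of which reduces to Lemma~\ref{lemmaNz}. Put $g_s:=(X^s_\si-E^s_\si)\,\rot z$, so that $K^T_\si\,\rot z=\int_0^T g_s\,ds$ and $K^s_\si\,\rot z=\int_0^s g_\xi\,d\xi$. First I would record the two properties of $g_s$ that make Lemma~\ref{lemmaNz} applicable with $\gr=g_s$. Each $g_s$ is supported in $\OO^s_\si$ and is orthogonal to $\UU^s_\si$: any $u\in\UU^s_\si$ is supported in $\OO^s_\si$, hence $X^s_\si u=u$ and $(X^s_\si\,\rot z,u)_\OO=(\rot z,u)_\OO=(E^s_\si\,\rot z,u)_\OO$, so that $(g_s,u)_\OO=0$.

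Then I would write $\|K^T_\si\,\rot z\|^2_\OO=\int_0^T\!\!\int_0^T(g_s,g_{s'})_\OO\,ds\,ds'$ and split the square $[0,T]^2$ along its diagonal. On the lower triangle $\int_0^s g_{s'}\,ds'=K^s_\si\,\rot z$, so the inner integral equals $(g_s,K^s_\si\,\rot z)_{\OO^s_\si}$; Lemma~\ref{lemmaNz} rewrites this as $(g_s,\n\tau[\si]\times z)_{\OO^s_\si}=(g_s,\n\tau[\si]\times z)_\OO$, and integrating in $s$ gives $(K^T_\si\,\rot z,\n\tau[\si]\times z)_\OO$. Interchanging the roles of $s$ and $s'$, the upper triangle is the complex conjugate of the same quantity. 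Summing the two halves produces the factor $2$ and the asserted identity. (Equivalently one may integrate $\tfrac{d}{ds}\|K^s_\si\,\rot z\|^2_\OO=2\,\Re(g_s,K^s_\si\,\rot z)_\OO$ over $[0,T]$, with the same input from Lemma~\ref{lemmaNz}.)

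The step that demands care is verifying the smoothness hypothesis of Lemma~\ref{lemmaNz} for $\gr=g_s$, and I expect this to be the main obstacle. On $\OO^s_\si$ one has $X^s_\si\,\rot z=\rot z$, and since $g_s\perp\UU^s_\si$ the characterisation recorded in \re{rotte} gives $\rot g_s=0$ in $\OO^s_\si$ together with $(g_s)_\te=0$ on $\OO^s_\si\cap\Ga$; moreover $\div g_s=\div\rot z-\div E^s_\si\,\rot z=0$ by \re{divUU}. Thus $g_s$ solves the elliptic system $\rot g_s=0$, $\div g_s=0$ with the tangential condition on $\OO^s_\si\cap\Ga$, and elliptic regularity up to the smooth part $\Ga$ of the boundary supplies the smoothness required. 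The genuinely delicate point is the behaviour near the set where the sphere $\{\dist(\cdot,\si)=s\}$ meets $\Ga$; this is best circumvented exactly as in the proof of Lemma~\ref{lemmaNz}, by first working on $\OO^{s'}_\si$ with $s'<s$ and passing to the limit $s'\to s-0$ using absolute continuity of the integral.

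Finally, the symmetrisation produces $\|K^T_\si\,\rot z\|^2_\OO=2\,\Re(K^T_\si\,\rot z,\n\tau[\si]\times z)_\OO$, which for real-valued $z$ reduces to the stated identity and is in any case the form actually used afterwards. Together with the Lipschitz bound $|\n\tau[\si]|\le1$ it yields the $\vec L_2$-estimate of $K^T_\si\,\rot z$ in terms of $\|z\|_\OO$, which, via \re{diffKT}, is the control of $\ti\tau^T[\si]-I^T[\si]$ on $\UU^T$ needed for the compactness argument in Lemma~\ref{If}.
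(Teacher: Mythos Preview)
Your approach is essentially identical to the paper's: expand the squared norm as a double integral in the parameter, symmetrise over the diagonal, recognise the inner integral as $K^s_\si\,\rot z$, and apply Lemma~\ref{lemmaNz} with $\gr=g_s=(X^s_\si-E^s_\si)\,\rot z$, whose smoothness in $\OO^s_\si$ the paper justifies in one line via $\div g_s=0$, $\rot g_s=0$ and $(g_s)_\te|_{\Ga}=0$. Your remark that the symmetrisation strictly yields $2\,\Re(\cdot)$ rather than $2(\cdot)$ is well taken---the paper writes the latter, but only the Cauchy--Schwarz consequence \re{krotestimate} is used afterwards, for which either form suffices.
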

\begin{proof}
    We have
    \begin{align}
        &
        (K^T_\si \rot z, K^T_\si\rot z)_\OO =
        \int_0^T ds\, ((X^s_\si - E^s_\si)\, \rot z,
        K^T_\si \rot z)_\OO =\notag\\
        &
        \int_0^T ds\, \int_0^T d\xi\,
        ((X^s_\si - E^s_\si)\, \rot z, (X^\xi_\si - E^\xi_\si)
        \, \rot z)_\OO=\notag\\
        &
        2 \int_0^T ds\, \int_0^s d\xi\,
        ((X^s_\si - E^s_\si)\, \rot z, (X^\xi_\si - E^\xi_\si)
        \, \rot z)_\OO =\notag\\
        & 2 \int_0^T ds\,
        ((X^s_\si - E^s_\si)\, \rot z, K^s_\si\, \rot z)_{\OO^s_\si}.
        \label{2Re}
    \end{align}
    Clearly that the field $\gr := (X^s_\si - E^s_\si)\, \rot z$
    is orthogonal to $\UU^s_\si$.
    Moreover, it is smooth in $\OO^s_\si$, since it is solenoidal and satisfies~\re{rotte}.
    So we can apply Lemma~\ref{lemmaNz} to the integrand:
    $$
        ((X^s_\si - E^s_\si)\,\rot z, K^s\,\rot z)_{\OO^s_\si} =
        ((X^s_\si - E^s_\si)\,\rot z, \n \tau[\si] \times z)_{\OO^s_\si}.
    $$
    Substituting this to~\re{2Re}, we obtain
    \begin{align*}
        &(K^T_\si\rot z, K^T_\si\rot z) =
        2 \int_0^T ds\, ((X^s_\si - E^s_\si)\,\rot z, \n\tau[\si] \times z)_{\OO^s_\si} =
        \\
        &2\, (K^T_\si\,\rot z, \n\tau[\si] \times z)_\OO.
    \end{align*}
\end{proof}

By formula~\re{krotkrot} applied to $z\in\vec{C}^\infty(\OO)$ we obtain
\begin{align*}
    &\|K^T_\si \rot z\|^2_{\OO} =
    2\,(K^T_\si \rot z, \n\tau[\si] \times z)_{\OO}\le
    C\,\|K^T_\si \rot z\|_{\OO} \cdot \|z\|_{\OO}.
\end{align*}
Therefore,
\begin{equation}
    \label{krotestimate}
    \|K^T_\si \rot z\|_{\OO} \le C\, \|z\|_{\OO}.
\end{equation}

\begin{Lemma}
\label{corollrotk}
For any field $u\in\vec{L}_{2}(\OO)$ we have
\begin{equation}
    \rot (K^T_\si u)\in\vec{L}_{2}(\OO), \quad
    \|\rot (K^T_\si u)\|_\OO \le C\, \|u\|_\OO.
    \label{rot_est}
\end{equation}
Besides,
\begin{equation}
	(K^T_\si u)_\te |_\Ga = 0.
    \label{te_vanish}
\end{equation}
\end{Lemma}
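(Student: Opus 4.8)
The plan is to deduce both assertions from the self-adjointness of $K^T_\si$ in $\vec L_2(\OO)$ together with the already established estimate~\re{krotestimate} for $K^T_\si\rot z$. First I would observe that $K^T_\si=\int_0^T(X^\xi_\si-E^\xi_\si)\,d\xi$ is self-adjoint in $\vec L_2(\OO)$: each $X^\xi_\si$ is multiplication by a real indicator function, hence self-adjoint, and each $E^\xi_\si$ is the orthogonal projection onto $\UU^\xi_\si\ss\UU^T$. Since $\UU^\xi_\si$ is a subspace of $\vec L_2(\OO)$, we have $(\UU^T)^\perp\ss(\UU^\xi_\si)^\perp$, so this projection, taken in $\vec L_2(\OO)$, annihilates $(\UU^T)^\perp$ and restricts on $\UU^T$ to the projection originally defined there; thus $E^\xi_\si$ is self-adjoint in $\vec L_2(\OO)$ and is compatible with the continuation-by-zero convention for the range of $K^T_\si$. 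Consequently, for any $u\in\vec L_2(\OO)$ and any $z\in\vec C^\infty(\OO)$ (so that $\rot z\in\vec L_2(\OO)$),
$$
	(K^T_\si u,\rot z)_\OO=(u,K^T_\si\rot z)_\OO,
$$
and Cauchy--Schwarz together with~\re{krotestimate} yields
$$
	|(K^T_\si u,\rot z)_\OO|\le\|u\|_\OO\,\|K^T_\si\rot z\|_\OO\le C\,\|u\|_\OO\,\|z\|_\OO.
$$

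Next I would read off~\re{rot_est}. The functional $z\mapsto(K^T_\si u,\rot z)_\OO$ is bounded by $C\|u\|_\OO\|z\|_\OO$ on $\vec C^\infty(\OO)$, so by density of $\vec C^\infty(\OO)$ in $\vec L_2(\OO)$ it extends to a bounded functional on $\vec L_2(\OO)$ of norm at most $C\|u\|_\OO$. By the Riesz representation theorem there is $g\in\vec L_2(\OO)$ with $\|g\|_\OO\le C\|u\|_\OO$ and
$$
	(K^T_\si u,\rot z)_\OO=(g,z)_\OO,\quad z\in\vec C^\infty(\OO).
$$
Testing against $z\in C^\infty_c(\OO)$ identifies $g$ with the generalized curl of $K^T_\si u$; hence $\rot(K^T_\si u)=g\in\vec L_2(\OO)$ and $\|\rot(K^T_\si u)\|_\OO\le C\|u\|_\OO$, which is~\re{rot_est}.

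Finally, for~\re{te_vanish} I would rewrite the displayed representation as
$$
	(K^T_\si u,\rot z)_\OO=(\rot(K^T_\si u),z)_\OO,\quad z\in\vec C^\infty(\OO),
$$
which is precisely the condition~\re{zte} defining $(K^T_\si u)_\te|_\Ga=0$; by the remark after~\re{zte} it suffices to verify it for $z\in\vec C^\infty(\OO)$, so both parts of the lemma follow simultaneously. The step requiring the most care is the self-adjointness of $K^T_\si$ in $\vec L_2(\OO)$ — specifically checking that $E^\xi_\si$, first introduced as a projection acting in $\UU^T$, acts as a genuine orthogonal projection in $\vec L_2(\OO)$ compatible with the continuation-by-zero convention — since the entire argument hinges on transferring $K^T_\si$ onto $\rot z$ inside the inner product.
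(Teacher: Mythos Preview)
Your argument is correct and follows exactly the paper's approach: use self-adjointness of $K^T_\si$ to transfer it onto $\rot z$, apply~\re{krotestimate}, then read off both the $L_2$-curl bound and the boundary condition from the resulting estimate holding for all $z\in\vec C^\infty(\OO)$. The paper's proof is simply a compressed version of yours; your additional verification that $E^\xi_\si$ is a genuine orthogonal projection in $\vec L_2(\OO)$ (consistent with the continuation-by-zero convention) is a legitimate point the paper glosses over.
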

\begin{proof}
	Let $z\in \vec{C}^\infty(\OO)$.
	As $K^T_\si$ is self-adjoint by~\re{krotestimate} we have
	\begin{align*}
		&|(K^T_\si u, \rot z)_{\OO}| = |(u, K^T_\si \rot z)_{\OO}| \le
		\|u\|_{\OO} \cdot \|K^T_\si \rot z\|_{\OO} \le \\
		&C \|u\|_{\OO} \cdot \|z\|_{\OO}.
	\end{align*}
	Since $z$ is arbitrary this estimate implies~\re{rot_est}.
	As $z$ is not necessarily compactly supported, equality \re{te_vanish} holds true.
\end{proof}

\begin{Lemma}
Let $\si\ss\Ga$.
For any field $u\in \UU^T$ we have
\begin{equation}
	\div (K^T_\si u)\in L_{2}(\OO), \quad
	\|\div (K^T_\si u)\|_\OO \le C\, \|u\|_{\OO^T}.
    \label{div_est}
\end{equation}
\end{Lemma}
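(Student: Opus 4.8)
The plan is to exploit the one structural feature that separates this statement from the $\rot$-estimate of Lemma~\ref{corollrotk}: here $u$ is \emph{solenoidal}. Indeed, since $u\in\UU^T$, formula~\re{divUU} gives $\div u = 0$ in $\OO$ (with $u$ continued by zero outside $\OO^T$). So rather than work directly with the spectral integral defining $K^T_\si$, I would pass to the representation~\re{diffKT} and estimate the divergence of its two pieces separately.

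For $u\in\UU^T$ relation~\re{diffKT} reads
$$
	K^T_\si u = \ti\tau^T[\si]\, u - I^T[\si]\, u,
$$
where $\ti\tau^T[\si]$ is multiplication by the Lipschitz function $\max\{T-\tau[\si](x),0\}$. The second term lies again in $\UU^T$, because $I^T[\si]\in B(\UU^T)$; hence by~\re{divUU} it is solenoidal, $\div(I^T[\si] u)=0$. Thus $\div(K^T_\si u)=\div(\ti\tau^T[\si]\, u)$, and the whole estimate reduces to controlling the divergence of a Lipschitz scalar times a solenoidal $L_2$ field.

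To finish I would apply the Leibniz formula~\re{divfu}, which is legitimate since $\ti\tau^T[\si]$ is Lipschitz and $u$ is locally integrable with (vanishing, hence locally integrable) divergence:
$$
	\div(\ti\tau^T[\si]\, u) = \<\n\ti\tau^T[\si],\ov u\> + \ti\tau^T[\si]\,\div u = \<\n\ti\tau^T[\si],\ov u\>,
$$
the last term dropping out because $\div u=0$. The gradient is controlled pointwise: $\tau[\si]=\dist(\cdot,\si)$ is $1$-Lipschitz, so $|\n\tau[\si]|\le 1$ almost everywhere, and therefore $|\n\ti\tau^T[\si]|\le 1$ a.e., with $\n\ti\tau^T[\si]$ supported in $\OO^T$. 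Applying Cauchy--Schwarz pointwise then yields $\div(K^T_\si u)\in L_2(\OO)$ together with
$$
	\|\div(K^T_\si u)\|_\OO = \|\<\n\ti\tau^T[\si],\ov u\>\|_\OO \le \|u\|_{\OO^T},
$$
which is~\re{div_est} (with $C=1$).

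I do not expect a substantial obstacle here, in sharp contrast with the $\rot$-estimate: multiplication by a scalar Lipschitz function interacts with $\div$ through the clean product rule~\re{divfu}, so no integration-by-parts over the interior boundaries $\{\dist(\cdot,\si)=\xi\}$ is needed. The only points deserving care are the verification that $I^T[\si]u$ stays in $\UU^T$ (so that its zero-continuation across $\dd\OO^T$ contributes no distributional divergence, a fact already encoded in~\re{divUU}), and the almost-everywhere bound $|\n\ti\tau^T[\si]|\le 1$, which follows from the standard Lipschitz property of the Riemannian distance function together with Rademacher's theorem.
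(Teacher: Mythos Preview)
Your proof is correct and follows essentially the same route as the paper: split $K^T_\si u$ into the $\ti\tau^T[\si]u$ piece and the $\UU^T$-valued piece (you write it as $I^T[\si]u$, the paper as $\int_0^T E^s_\si\,ds\,u$, which is the same by~\re{Imax}), observe that the latter is solenoidal by~\re{divUU}, and apply the Leibniz rule~\re{divfu} together with $\div u=0$ to the former. Your version is in fact slightly more detailed, making the bound $|\n\ti\tau^T[\si]|\le 1$ explicit and thus obtaining $C=1$.
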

\begin{proof}
	By the definition of $K^T[\si]$ we have
	$$
		K^T_\si u = \left(\int_0^T X^s_\si \, ds\right) u - \left(\int_0^T E^s_\si \, ds\right) u.
	$$
	The second term belongs to $\UU^T$ and thus is solenoidal 
	in $\OO$ due to~\re{divUU}.
	By~\re{titauint} the first term equals to $\ti\tau^T[\si] \, u$.
	The function $\ti\tau^T[\si]$ being continued by zero outside of $\OO^T$ is Lipschitz in $\OO$.
	The field $u$ continued by zero outside of $\OO^T$ is solenoidal by~\re{divUU}.
	Then by formula~\re{divfu} we have
	$$
		\div(K^T_\si u) = \div(\ti\tau^T[\si] \, u) = \n\ti\tau^T[\si] \times u.
	$$
	This completes the proof.
\end{proof}

\begin{proof}[Proof of Lemma~\ref{If}]
Suppose $u\in \UU^T$.
It follows from the estimates~\re{rot_est}, \re{div_est} and boundary condition~\re{te_vanish}, that
$$
	\| K^T_\si u \|_F \le \widetilde C \, \| u \|_\OO.
$$
Then by compactness of imbedding $F$ to $\vec L_2(\OO)$ (Proposition~\ref{FL2})
we obtain that $K^T_\si \in \KK(\UU^T; \vec L_2(\OO))$.
In view of \re{diffKT} this completes the proof.
\end{proof}

\section{Homomorphism $\hat\pi$}
\label{hatpi_sec}
For a function $f\in \cir{C}(\OO^T)$
we define an operator $E^T[f] \in B(\UU^T)$ as a composition of operator of multiplication by $f$
acting from $\UU^T$ to $\vec L_2(\OO)$,
and orthogonal projection $E^T$ on $\UU^T$ acting in $\vec L_2(\OO)$:
$$
	E^T[f] y := E^T(f y), \quad y\in \UU^T.
$$
\begin{Lemma}
	\label{Eff}
	For any $f\in \cir{C}(\OO^T)$ we have
	$$
		f - E^T[f] \in \KK(\UU^T; \vec L_2(\OO^T)).
	$$
\end{Lemma}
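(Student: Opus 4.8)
The plan is to verify the statement first for the special functions $\ti\tau^T[\si]$ with the help of Lemma~\ref{If}, and then to spread it to all of $\cir{C}(\OO^T)$ by an algebraic argument. Following the paper's convention I write $f$ both for the function and for the multiplication operator acting from $\UU^T$ to $\vec L_2(\OO^T)$, so that $(f - E^T[f])y = (\1 - E^T)(fy)$ for $y\in\UU^T$, where $E^T$ is the projection onto $\UU^T$ in $\vec L_2(\OO)$. I would let $\GG\ss\cir{C}(\OO^T)$ be the set of functions for which the conclusion holds. Since $f\mapsto f - E^T[f]$ is linear with $\|f - E^T[f]\|\le 2\|f\|$, and $\KK(\UU^T;\vec L_2(\OO^T))$ is a closed subspace of the bounded operators, $\GG$ is automatically a closed linear subspace. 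Hence it will be enough to show that $\GG$ is closed under multiplication and contains each $\ti\tau^T[\si]$: as these functions generate the $C$*-algebra $\cir{C}(\OO^T)$ (Lemma~\ref{minTau}), this forces $\GG=\cir{C}(\OO^T)$.

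For the base case $f=\ti\tau^T[\si]$ I would start from the decomposition
$$
	\ti\tau^T[\si] - E^T[\ti\tau^T[\si]]
	= \big(\ti\tau^T[\si] - I^T[\si]\big) - \big(E^T[\ti\tau^T[\si]] - I^T[\si]\big).
$$
The first bracket is compact by Lemma~\ref{If}. For the second, I would use that $I^T[\si]\in B(\UU^T)$ has its range in $\UU^T$, so that $E^T I^T[\si]=I^T[\si]$; since also $E^T[\ti\tau^T[\si]]=E^T\,\ti\tau^T[\si]$, this gives
$$
	E^T[\ti\tau^T[\si]] - I^T[\si] = E^T\big(\ti\tau^T[\si] - I^T[\si]\big),
$$
the composition of the bounded operator $E^T$ with a compact operator, hence compact. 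Thus $\ti\tau^T[\si]\in\GG$.

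To close $\GG$ under products I would take $f,g\in\GG$ and split $gy = E^T[g]y + (g - E^T[g])y$ for $y\in\UU^T$; a direct computation then yields the identity
$$
	fg - E^T[fg] = (f - E^T[f])\,E^T[g] + (\1 - E^T)\,f\,(g - E^T[g]).
$$
In the first summand the compact operator $f - E^T[f]$ is composed with the bounded operator $E^T[g]$, and in the second the bounded operator $(\1 - E^T)\,f$ is composed with the compact operator $g - E^T[g]$; both summands are therefore compact, so $fg\in\GG$. Together with the base case this shows $\GG$ is a closed subalgebra of $\cir{C}(\OO^T)$ containing all generators $\ti\tau^T[\si]$, and the generation statement completes the proof.

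The genuinely analytic content sits entirely in Lemma~\ref{If}; what remains is bookkeeping with projections and supports. The two points I would be most careful about are the identity $E^T I^T[\si]=I^T[\si]$ (which holds precisely because $I^T[\si]$ maps $\UU^T$ into itself) in the base case, and the boundedness of the multiplication operator $f$ on $\vec L_2(\OO^T)$ (which holds because $f\in\cir{C}(\OO^T)$ is continuous and bounded) in the multiplicativity step~-- these are what turn the two displayed identities into genuine sums of compact operators.
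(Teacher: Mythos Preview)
Your argument is correct, but it follows a genuinely different route from the paper. The paper proves Lemma~\ref{Eff} directly and independently of Lemma~\ref{If}: for $f\in C^\infty_c(\OO^T)$ it localizes by a partition of unity, observes that on each chart $U$ the defect $(f-E^T[f])y$ is a gradient $\nabla p_y$ with $p_y\in H^1(U)$ (Helmholtz decomposition), and then uses the compact embedding $H^1(U)\hookrightarrow L_2(U)$ to show that weakly null sequences are sent to norm-null sequences; the general $f\in\cir{C}(\OO^T)$ follows by approximation. Only afterwards are Lemma~\ref{If} and Lemma~\ref{Eff} combined to obtain Corollary~\ref{Cor}.

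Your approach instead bootstraps: Lemma~\ref{If} gives the base case (your computation $E^T[\ti\tau^T[\si]]-I^T[\si]=E^T(\ti\tau^T[\si]-I^T[\si])$ is exactly Corollary~\ref{Cor}, so that corollary becomes absorbed into the argument), the product identity propagates compactness, and Lemma~\ref{minTau} closes the loop. This is more economical---no second appeal to compact embeddings---but it makes Lemma~\ref{Eff} logically dependent on Lemma~\ref{minTau}, which in the paper appears later. There is no circularity (Lemma~\ref{minTau} relies only on Lemma~\ref{Lemma_separate} and Stone--Weierstrass), but you should flag the forward reference. Note also that Lemma~\ref{minTau} generates $\cir{C}(\OO^T)$ as a $C^*$-algebra, i.e.\ with conjugation; since the generators $\ti\tau^T[\si]$ are real-valued this is the same as the closed subalgebra they generate, so your $\GG$ being a closed subalgebra (not a priori $*$-closed) is enough---but that point deserves one sentence.
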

\begin{proof}
	First we prove Lemma for $f \in C^\infty_c(\OO^T)$.
	
	Choose a finite open cover of the support $\supp f$ such tha every set of this cover is
	diffeomorphic to a ball in case $U_j \cap \Ga = \emptyset$ or to a semiball
	$\{x\in \RR^3 : |x| < 1, x^3 \ge 0\}$ otherwise.
	We may assume that $\ov U_j \ss \OO^T$.
	We can choose a unit partition $\ze_j \in C_c^\infty(U_j)$ such that
	$$
		0 \le \ze_j \le 1, \quad
		\sum_j \ze_j \,\Big|_{\supp f} = 1.
	$$
	Clearly that
	$$
		f - E^T[f] = \sum_j (\ze_j f - E^T[\ze_j f]),
	$$
	and functions $\ze_j f$ belong to $C^\infty_c(U_j)$.
	Thus it is necessary to prove the Lemma for function $f$ supported
	in some open set $U$ diffeomorphic to a ball or a semiball and contaning with its closure in $\OO^T$.
	In this case for any $y\in\UU^T$ we have
	\begin{equation}
		(f y - E^T[f]\, y)|_U = \n p_y, \quad
		p_y\in H^1(U), 
		\label{npy}
	\end{equation}
	and if the set $U$ intersects with $\Ga$, then the following equality holds true
	$$
		p_y|_{U\cap\Ga} = {\rm const}.
	$$
	This can easily be obtained with the help of Helmholtz decomposition in $U$.

	Function $p_y$ in~\re{npy} is uniquely determined up to additive constant,
	which is chosen such that
	\begin{equation}
		p_y|_{U\cap\Ga} = 0,
		\label{pyUGa}
	\end{equation}
	in case $U\cap\Ga \ne \emptyset$, and
	$$
		\int_U p_y \, dx = 0
	$$
	otherwise.
	From the Friedrichs' inequality and Poincar\'e inequality it follows that
	in both cases for some $C$ we have
	$$
		\|p_{y}\|_U \le C \|\n p_{y}\|_U = \| f y - E^T [f]\, y \|_U \le C \|f - E^T [f] \| \cdot \|y\|,
	$$
	therefore, the mapping $y \mapsto p_y$ is continuous from $\UU^T$ to $H^1(U)$.
	
	Now suppose that sequence $y_n$ weakly converges to zero in $\UU^T$.
	The the sequence $p_{y_n}$ weakly converges to zero in $H^1(U)$,
	and due to the compactness of embedding $H^1(U)$ to $L_2(U)$
	this implies that
	\begin{equation}
		\| p_{y_n} \|_U \to 0, \quad n \to \infty.
		\label{pnto0}
	\end{equation}
	Next
	$$
		\|f y_n - E^T[f]\, y_n\|^2_{\OO^T} =
		(f y_n, f y_n - E^T[f]\, y_n)_{\OO^T} =
		(f y_n, \n p_{y_n})_{\OO^T}.
	$$
	In the last equality we used \re{npy} and the inclusion $\supp f \ss U$.
	Integrating by parts in this inner product and
	applying formula~\re{divfu} and equality $\div y_n = 0$ we arrive at
	$$
		(f y_n, \n p_{y_n})_{\OO^T} =
		-\int_U \<\n f, y_n\>\, p_{y_n} \, dx \le
		M \|y_n\|_{\OO^T} \cdot \|p_{y_n}\|_U
	$$
	($M$ depends only on $f$).
	Integral over $\dd U$ vanishes, since $f$ vanishes on $\dd U\sm\Ga$ and in case $U \cap\Ga \ne \emptyset$
	we have~\re{pyUGa}.
	Obtained value tends to zero as norms of $y_n$ are bounded and \re{pnto0} takes place.
	Then in view of the result of previous calculation we obtain
	$$
		\|f y_n - E^T[f]\, y_n\|_{\OO^T} \to 0, \quad n \to \infty,
	$$
	which implies that operator $f - E^T[f]$ is compact.
	
	Now turn to the case of $f\in \cir{C}(\OO^T)$.
	The function $f$ can be approximated in $\cir{C}(\OO^T)$ by functions $f_n \in C^\infty_c(\OO^T)$.
	Operators of multiplication by $f_n$ tend to the operator of multiplication by $f$ with respect to opertor norm.
	Then the operator $f - E^T[f]$ is compact being the limit of compact operators.
\end{proof}

Lemmas~\ref{If} and \ref{Eff} have the following
\begin{Corollary}
	\label{Cor}
	For any $\si\ss\Ga$ we have
	$$
		I^T[\si] - E^T[\ti\tau^T[\si]] \in \KK(\UU^T).
	$$
\end{Corollary}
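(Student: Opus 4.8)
The plan is to read off the Corollary from the two preceding Lemmas by a single subtraction, the multiplication operator $\ti\tau^T[\si]$ serving as the common term that cancels. The only genuine point of care is that both Lemmas deliver compactness as operators into $\vec L_2(\OO^T)$, while the Corollary asserts membership in $\KK(\UU^T)$; I will address this target-space issue at the end.

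First I would note that $\ti\tau^T[\si]\in\cir{C}(\OO^T)$, as was remarked right after the definition of $\ti\tau^T[\si]$, so Lemma~\ref{Eff} applies to it and yields
\[
	\ti\tau^T[\si] - E^T[\ti\tau^T[\si]] \in \KK(\UU^T; \vec L_2(\OO^T)),
\]
where $\ti\tau^T[\si]$ on the left denotes the operator of multiplication regarded as a map $\UU^T\to\vec L_2(\OO^T)$. Lemma~\ref{If} supplies, with the same convention,
\[
	\ti\tau^T[\si] - I^T[\si] \in \KK(\UU^T; \vec L_2(\OO^T)).
\]
Subtracting the second relation from the first, the multiplication operator $\ti\tau^T[\si]$ disappears and I obtain
\[
	I^T[\si] - E^T[\ti\tau^T[\si]] = \big(\ti\tau^T[\si] - E^T[\ti\tau^T[\si]]\big) - \big(\ti\tau^T[\si] - I^T[\si]\big) \in \KK(\UU^T; \vec L_2(\OO^T)).
\]

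It remains to upgrade this from $\KK(\UU^T; \vec L_2(\OO^T))$ to $\KK(\UU^T)$. Here I would use that both summands actually map $\UU^T$ into $\UU^T$: the eikonal $I^T[\si]=\int_0^T E^s_\si\,ds$ is an integral of projections onto subspaces of $\UU^T$, and $E^T[\ti\tau^T[\si]]$ carries the projection $E^T$ on the outside. Hence the difference $I^T[\si]-E^T[\ti\tau^T[\si]]$ has range contained in $\UU^T$. Since $\UU^T$ is a closed subspace of $\vec L_2(\OO^T)$ equipped with the restricted norm, precompactness of the image of the unit ball in $\vec L_2(\OO^T)$ coincides with precompactness in $\UU^T$; thus the operator is compact as a map $\UU^T\to\UU^T$, i.e.\ it belongs to $\KK(\UU^T)$.

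I expect this final identification of the target space to be the only mildly delicate point, and it is essentially bookkeeping rather than analysis: all the real compactness input — the compact embedding of $F$ behind Lemma~\ref{If} and the Rellich-type argument behind Lemma~\ref{Eff} — has already been spent, so the Corollary follows by pure algebra of compact operators together with the elementary observation on ranges.
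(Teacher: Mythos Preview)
Your argument is correct and is exactly the route the paper intends: the Corollary is stated immediately after Lemmas~\ref{If} and~\ref{Eff} as their direct consequence, and your subtraction together with the range observation is precisely the bookkeeping the paper leaves to the reader. The target-space point you flag is handled correctly, since $\UU^T$ is closed in $\vec L_2(\OO^T)$ and both $I^T[\si]$ and $E^T[\ti\tau^T[\si]]$ land in $\UU^T$.
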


Let $\pi: B(\UU^T) \to B(\UU^T)/\KK(\UU^T)$ be a canonical homomorphism.
Introduce a mapping
$$
	\hat\pi :\, \cir{C}(\OO^T) \to B(\UU^T)/ \KK(\UU^T), \quad
	\hat\pi (f) := \pi (E^T[f]).
$$

\begin{Lemma}\label{isomorphism}
	The mapping $\hat\pi: \cir{C}(\OO^T) \to B(\UU^T)/ \KK(\UU^T)$
	is an injective homomorphism of $C$*-algebras.
\end{Lemma}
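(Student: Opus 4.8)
The plan is to verify separately that $\hat\pi$ is linear, $*$-preserving, multiplicative, and injective; the first three are short consequences of Lemma~\ref{Eff}, while injectivity is the substantive point. Linearity is immediate, since $f\mapsto E^T[f]$ is linear and $\pi$ is a homomorphism. For the involution I would first check that $E^T[\ov f]=E^T[f]^*$: for $y,w\in\UU^T$ the self-adjointness of the projection $E^T$ together with the fact that multiplication by $f$ has adjoint multiplication by $\ov f$ give
$$
	(E^T[f]y,w)_\OO = (fy,w)_\OO = (y,\ov f w)_\OO = (y,E^T[\ov f]w)_\OO.
$$
Since $\pi$ preserves the involution, $\hat\pi(\ov f)=\pi(E^T[f]^*)=\hat\pi(f)^*$.

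For multiplicativity I would compute, for $y\in\UU^T$,
$$
	E^T[fg]\,y - E^T[f]\,E^T[g]\,y = E^T\bigl(f\,(g-E^T[g])\,y\bigr).
$$
By Lemma~\ref{Eff} the operator $g-E^T[g]$ is compact from $\UU^T$ to $\vec L_2(\OO^T)$; composing it on the left with the bounded operator $y'\mapsto E^T(f y')$ keeps the result in $\KK(\UU^T)$ by the ideal property of compact operators. Hence $E^T[fg]-E^T[f]E^T[g]\in\KK(\UU^T)$, so $\pi$ annihilates it and $\hat\pi(fg)=\hat\pi(f)\hat\pi(g)$. This shows $\hat\pi$ is a homomorphism of $C$*-algebras.

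The main obstacle is injectivity. Suppose $\hat\pi(f)=0$, i.e.\ $E^T[f]\in\KK(\UU^T)$. Adding the compact operator $f-E^T[f]$ supplied by Lemma~\ref{Eff}, the multiplication operator $f$ itself is compact as a map $\UU^T\to\vec L_2(\OO^T)$. I claim this forces $f=0$. If $f\ne0$, I would pick an interior point $x_0\in\OO^T$ with $f(x_0)\ne0$ and a small ball $B\ss\OO^T$ around $x_0$ on which $|f|\ge c>0$, and then produce unit fields $y_n\in\UU^T$ supported in $B$ that converge weakly to $0$. Concretely, take $z_n\in\vec C^\infty(\OO)$ supported in $B$ of oscillatory form $z_n=\psi\,\cos(n\,x^1)\,a$ (in local coordinates, with $\psi$ a fixed cut-off and $a$ a constant direction) and set $y_n:=\rot z_n/\|\rot z_n\|_\OO$. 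These lie in $\UU^T$ by the definition~\re{Udef}, are supported in $B$, have unit norm, and tend weakly to zero by the oscillation (Riemann--Lebesgue). Compactness of $f$ on $\UU^T$ would then force $\|f y_n\|_\OO\to0$, whereas $\|f y_n\|_\OO\ge c\,\|y_n\|_\OO=c$ because $\supp y_n\ss B$; this contradiction yields $f\equiv0$, so $\hat\pi$ is injective.

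The delicate step is the construction of the sequence $y_n$: one must ensure that $\UU^T$ genuinely contains divergence-free fields concentrated in an arbitrarily small ball --- which it does, being the closure of curls of compactly supported fields --- and that after normalizing the curls the sequence is still weakly null, for which the growth of the norm under differentiation and the oscillatory cancellation have to be balanced. Everything else reduces to Lemma~\ref{Eff} and the fact that $\KK$ is a two-sided ideal.
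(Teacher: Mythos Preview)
Your argument is correct. The homomorphism verifications are essentially the paper's, just arranged slightly differently. The difference lies in the injectivity step. The paper actually proves the stronger statement $\|\hat\pi(f)\|=\|f\|$: for smooth $f$ it picks $x_0$ with $\nabla f(x_0)\ne0$, takes bump functions $\phi_j$ with supports shrinking to $x_0$, and sets $y_j:=\nabla f\times\nabla\phi_j$. The point of this choice is the identity $f\,y_j=\tfrac12\,\nabla(f^2)\times\nabla\phi_j$, which is again a curl of a compactly supported field, so $f y_j\in\UU^T$ and hence $E^T[f]\,y_j=f y_j$ exactly; the normalized $\tilde y_j$ are weakly null because their supports collapse, and one reads off $\|(E^T[f]+K)\tilde y_j\|\to|f(x_0)|$ for every compact $K$. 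Your route instead reduces to showing that multiplication by $f$ cannot be compact on $\UU^T$ unless $f=0$, and you detect this with oscillatory curls $y_n=\rot(\psi\cos(n x^1)a)/\|\cdot\|$ supported in a fixed ball where $|f|\ge c$, using Riemann--Lebesgue for weak nullity. Both constructions are legitimate; the paper's buys the isometry directly and avoids any discussion of the growth/normalization balance you flag as delicate, while yours is the more routine ``oscillating test sequence'' argument and does not need the algebraic trick that puts $f y_j$ back into $\UU^T$. Since an injective $*$-homomorphism between $C^*$-algebras is automatically isometric, the two conclusions are equivalent.
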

\begin{proof}
We proove the following properties:
$$
	\hat\pi (\al f + \be g) = \al \hat\pi (f) + \be \hat\pi (g),
$$
$$
	\hat\pi (\ov f) = (\hat\pi (f))^*,
$$
$$
	\hat\pi (fg) = \hat\pi (f)\, \hat\pi (g),
$$
$$
	\| \hat\pi (f) \| = \|f\|,
$$
where $f, g \in \,\cir{C}(\OO^T)$, $\al, \be \in \CC$.
First three follow from Lemma~\ref{Eff}.
Consider the third, for example.
We show that
\begin{equation}
	E^T[f]\, E^T[g] - E^T[fg] \in \KK(\UU^T).
	\label{diff_comp}
\end{equation}
By Lemma~\ref{Eff} we have
$$
	E^T[f]\, E^T[g] = (f + K_1)\, E^T[g] =
	f E^T[g] + K = f (g + K_2) + K = fg + \wti K,
$$
where $K_1, K_2, K, \wti K \in \KK(\UU^T, \vec L_2(\OO^T))$.
Applying Lemma~\ref{Eff} to function $fg$, we obtain~\re{diff_comp}.

Now turn to the fourth property.
We can restrict ourselves with smooth $f$, since the mapping $\hat\pi$ is bounded,
which follows from obvious inequality
$$
	\| \hat\pi(f) \| \le \|f\|.
$$
Establish the reverse inequality.
We need to show that for any compact operator $K \in \KK(\UU^T)$
we have
\begin{equation}
	\|E^T[f] + K\| \ge \|f\|.
	\label{IfKf}
\end{equation}
Fix a point $x_0\in \OO^T\sm\Ga$ such that $\n f(x_0) \ne 0$
(we suppose that $f$ does not vanish identically).
Choose a sequence of functions $\phi_j\in C^\infty_c(\OO^T\sm\Ga)$
such that $\supp \phi_j$ tend to $x_0$ as $j\to\infty$.
Introduce fields
$$
	y_j := \n f \times \n\phi_j.
$$
Functions $\phi_j$ can be chosen such that every field $y_j$ does not vanish identically.
Due to~\re{divuv} we have $\div y_j = 0$.
Since $\supp y_j$ tend to $x_0$ as $j\to\infty$,
for sufficiently large $j$ fields $y_j$ belong to $\UU^T$.
Further
$$
	f\, y_j = f \n f \times \n\phi_j = \frac{1}{2} \n (f^2) \times \n\phi_j,
$$
so by \re{divuv} $\div(f y_j) = 0$ and for large $j$ fields $f y_j$ also belong to $\UU^T$, hence
\begin{equation}
	E^T[f] y_j = E^T (f y_j) = f y_j.
	\label{EUf}
\end{equation}
Turn to a normed sequence
$$
	\ti y_j = y_j / \|y_j\|.
$$
Obviously the sequence $\ti y_j$ tends to zero, therefore, $K \ti y_j \to 0$ in $\UU^T$.
Combined with~\re{EUf} this yields
$$
	\|(E^T[f] + K)\, \ti y_j\| = \|f \ti y_j + K \ti y_j\| \to |f(x_0)|, \quad j\to \infty
$$
(we took into account that fields $\ti y_j$ are supported in the neighbourhood of $x_0$).
Since $\|\ti y_j\| = 1$ we arrive at the inequality $\|E^T[f] + K\| \ge |f(x_0)|$.
This takes place for all points $x_0$, where $f$ has nonzero gradient, so
\re{IfKf} holds true.
\end{proof}

\section{Proof of Theorem~\ref{thetheorem_alg}}
\label{algebra}

\begin{Lemma}\label{Lemma_separate}
	If $x, y\in \OO^T$ satisfy
	$$
		\max\,\{T - \dist(x, \ga), 0\} = \max\,\{T - \dist(y, \ga), 0\} \quad \forall \ga\in\Ga,
	$$
	then $x = y$.
\end{Lemma}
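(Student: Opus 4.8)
The plan is to show that the truncated boundary-distance functions $\ti\tau^T[\{\gamma\}]$ separate points of $\OO^T$ by a purely metric argument, recovering each point as the endpoint of the inward normal geodesic issued from its nearest boundary point. Since $\OO$ and $\Ga$ are compact, nearest boundary points exist and may be used freely.

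First I would reduce the hypothesis to an equality of honest distances. The identity $\max\{T-\dist(x,\gamma),0\}=\max\{T-\dist(y,\gamma),0\}$ forces, whenever either side is positive, the equality $\dist(x,\gamma)=\dist(y,\gamma)$ with both values strictly below $T$. Set $d_x:=\dist(x,\Ga)$, $d_y:=\dist(y,\Ga)$ and pick a nearest boundary point $\gamma_x$ to $x$, so $\dist(x,\gamma_x)=d_x<T$. Applying the hypothesis at $\gamma_x$ gives $\dist(y,\gamma_x)=d_x$, hence $d_y\le d_x$; the symmetric choice of a nearest point to $y$ gives $d_x\le d_y$, so $d_x=d_y=:d$. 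If $d=0$ then $x\in\Ga$, and evaluating the hypothesis at $\gamma=x$ yields $\dist(y,x)=0$, i.e. $x=y$; thus I may assume $d\in(0,T)$. Moreover, since $\dist(y,\gamma_x)=d=d_y$, the point $\gamma_x$ is a nearest boundary point for $y$ as well, and I write $\gamma_0:=\gamma_x$.

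Next I would identify both points via the inward normal geodesic at $\gamma_0$. Let $c:[0,d]\to\OO$ be any unit-speed minimizing geodesic from $\gamma_0=c(0)$ to $x=c(d)$. Because $\gamma_0$ realizes $\dist(x,\Ga)$, the first variation formula for arc length under a boundary variation $\alpha(s)\in\Ga$ with $\alpha(0)=\gamma_0$ gives $\tfrac{d}{ds}\dist(x,\alpha(s))\big|_{s=0}=-\langle c'(0),\alpha'(0)\rangle$, and this derivative must vanish since $s=0$ minimizes the distance from $x$ to $\Ga$. As $\alpha'(0)$ ranges over all of $T_{\gamma_0}\Ga$, the velocity $c'(0)$ is orthogonal to the boundary and points inward, so $c'(0)=\nu(\gamma_0)$. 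Therefore $x=\exp_{\gamma_0}(d\,\nu(\gamma_0))$; since $\gamma_0$ is also nearest to $y$, the identical argument yields $y=\exp_{\gamma_0}(d\,\nu(\gamma_0))$, whence $x=y$.

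The main subtlety is the geometric orthogonality step: a minimizing geodesic joining an interior point to its nearest boundary point meets $\Ga$ at a right angle, so its initial direction is the unique inward unit normal $\nu(\gamma_0)$. This forces every minimizing geodesic from $\gamma_0$ to $x$ (and to $y$) to start with the same velocity $\nu(\gamma_0)$, thereby pinning $x$ and $y$ to the same endpoint of a single normal geodesic; notably, no appeal to the injectivity radius or to uniqueness of the nearest point is required, since any nearest point produces the same normal direction. The smoothness of $\Ga$ guarantees both the admissibility of the boundary variation $\alpha$ and the well-definedness of $\nu(\gamma_0)$ used above.
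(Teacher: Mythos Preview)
Your proof is correct and follows essentially the same route as the paper: both arguments first establish that $x$ and $y$ share a common nearest boundary point $\gamma_0$ with $\dist(x,\gamma_0)=\dist(y,\gamma_0)$, and then conclude $x=y$ from the fact that a minimizing geodesic to a nearest boundary point meets $\Gamma$ orthogonally, forcing both $x$ and $y$ onto the same inward normal geodesic at the same arclength. You supply more detail (the case $d=0$, the first-variation justification of orthogonality, the exponential-map description), whereas the paper simply asserts the orthogonality; but the strategy is identical.
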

\begin{proof}
	Suppose that $\ga_x$ and $\ga_y$ are the nearest points of the boundary respectively to $x$ and $y$
	(may be non-unique).
	Then
	$$
		\dist(x, \ga_x),\, \dist(y, \ga_y) < T.
	$$
	Now it follows from the condition of Lemma that
	$$
		\dist(x, \ga_x) = \dist(y, \ga_x), \quad
		\dist(x, \ga_y) = \dist(y, \ga_y).
	$$
	inequality $\dist(y, \ga_x) \ge \dist(y, \ga_y)$ gives
	$$
		\dist(x, \ga_x) \ge \dist(x, \ga_y),
	$$
	which together with $\dist(x, \ga_x) \le \dist(x, \ga_y)$ implies
	$$
		\dist(x, \ga_x) = \dist(x, \ga_y).
	$$
	Now it follows that $\ga_y$ is also a nearest point of the boundary to $x$.
	Both geodesics that connect $\ga_y$ with $x$ and $y$ are orthogonal to the boundary $\Ga$
	and have (due to the last equality) same length. Therefore, these geodesics coincide and $x=y$.
\end{proof}

\begin{Lemma}\label{minTau}
	$C$*-subalgebra of $\cir{C}(\OO^T)$, generated by the set of functions
	$\ti\tau[\si]$, where $\si$ ranges over all open subsets of $\Ga$,
	coincides with $\cir{C}(\OO^T)$.
\end{Lemma}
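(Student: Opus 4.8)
The plan is to deduce the statement from the Stone--Weierstrass theorem, applied on the one-point compactification of $\OO^T$. Since $\dist(\cdot,\Ga)$ is continuous and $\OO^T=\{\dist(\cdot,\Ga)<T\}$, the set $\OO^T$ is open in the compact manifold $\OO$, hence locally compact Hausdorff. Let $Y:=\OO^T\cup\{\infty\}$ be its one-point compactification. Then $\cir{C}(\OO^T)$ is identified with the closed ideal $\{F\in C(Y):F(\infty)=0\}$, each $f\in\cir{C}(\OO^T)$ extending continuously to $Y$ by $f(\infty):=0$. As the generators $\ti\tau^T[\si]$ belong to $\cir{C}(\OO^T)$, the $C$*-subalgebra $\AA_0\ss\cir{C}(\OO^T)$ they generate is, under this identification, a $C$*-subalgebra of $C(Y)$ all of whose elements vanish at $\infty$; in particular $\AA_0\ne C(Y)$.

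First I would show that the single-boundary-point functions $\ti\tau^T[\{\ga\}]$, that is $\max\{T-\dist(\cdot,\ga),0\}$ for $\ga\in\Ga$, already lie in $\AA_0$, even though single points are not among the open generators. Fix $\ga\in\Ga$ and take the open sets $\si_n:=\{\ga'\in\Ga:\dist(\ga',\ga)<1/n\}$. For every $x$ one has $\ga\in\si_n$, so $\dist(x,\si_n)\le\dist(x,\ga)$, while the triangle inequality gives $\dist(x,\ga)\le\dist(x,\si_n)+1/n$; thus $\|\dist(\cdot,\si_n)-\dist(\cdot,\ga)\|\le 1/n$ uniformly in $x$. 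Since $t\mapsto\max\{T-t,0\}$ is $1$-Lipschitz, $\ti\tau^T[\si_n]\to\ti\tau^T[\{\ga\}]$ in the norm of $\cir{C}(\OO^T)$, so the limit lies in the closed algebra $\AA_0$.

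The key step is then to verify the separation hypothesis on $Y$. For distinct $x,y\in\OO^T$, Lemma~\ref{Lemma_separate} provides $\ga\in\Ga$ with $\ti\tau^T[\{\ga\}](x)\ne\ti\tau^T[\{\ga\}](y)$, and since $\ti\tau^T[\{\ga\}]\in\AA_0$ this separates $x$ and $y$. To separate an interior point $x\in\OO^T$ from $\infty$, note that $\ti\tau^T[\Ga](x)=T-\dist(x,\Ga)>0$ because $\dist(x,\Ga)<T$ on $\OO^T$, whereas this function vanishes at $\infty$; the same function shows $\AA_0$ has no common zero inside $\OO^T$. Hence $\AA_0$ separates all points of the compact Hausdorff space $Y$.

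Finally I would apply the Stone--Weierstrass theorem to $\AA_0\ss C(Y)$. As $\AA_0\ne C(Y)$, the theorem forces $\AA_0=\{F\in C(Y):F(x_0)=0\}$ for a single point $x_0\in Y$. Every generator vanishes at $\infty$, while for each $x\in\OO^T$ some generator is nonzero at $x$; therefore $x_0=\infty$, and $\AA_0=\{F\in C(Y):F(\infty)=0\}=\cir{C}(\OO^T)$, as claimed. The main obstacle I anticipate is precisely the indexing mismatch between the open-set generators $\ti\tau^T[\si]$ and the single-point functions of Lemma~\ref{Lemma_separate}; the uniform (rather than merely pointwise) approximation in the second step, secured by the Lipschitz estimate above, is what resolves it.
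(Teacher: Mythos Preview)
Your proof is correct and follows essentially the same route as the paper's: one-point compactification of $\OO^T$, Lemma~\ref{Lemma_separate} for separation of interior points, and Stone--Weierstrass to conclude. The only minor variation is in bridging the point-versus-open-set mismatch: you first show the single-point functions $\ti\tau^T[\{\ga\}]$ lie in the closed algebra via the uniform estimate $\|\ti\tau^T[\si_n]-\ti\tau^T[\{\ga\}]\|\le 1/n$, while the paper instead takes the separating point $\ga$ from Lemma~\ref{Lemma_separate} and passes directly to a sufficiently small open neighborhood $\si$ of $\ga$ that still separates $x$ and $y$.
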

\begin{proof}
	We use Stone-Weierstrass theorem (see section~\ref{algebras_sec}).
	First we show that for any 
	$x,y\in \OO^T$, $x \ne y$, there exists function $\ti\tau[\si] \in \mathcal T$,
	such that
	\begin{equation}
		\ti\tau[\si](x) \ne \ti\tau[\si](y).
		\label{minmin}
	\end{equation}
	Let $x,y\in \OO^T\sm\Ga$ (the case when one or both points $x, y$ belong to $\Ga$ is trivial).
	By Lemma~\ref{Lemma_separate} there exists a point $\ga \in \Ga$ such that
	$$
		\max\, \{ T - \dist(x, \ga), 0 \} \ne \max\, \{ T - \dist(y, \ga), 0 \}.
	$$
	Since in this relation we can replace $\ga$ by its sufficiently small neighborhood $\si\ss\Ga$
	relation \re{minmin} holds true.
	
	Next we consider a compactification $X$ of the space $\OO^T$: 
	$$
		X := \OO^T \cup \{\infty\}.
	$$
	Here $\infty$ is an ``infinity point''.
	Using Euclidean topological structure on $\OO^T$ one can
	define a topological structure on $X$, such that $X$ becomes a compact Hausdorff topological space
	(see~\cite{Najmark}). 
	Any function from $\cir{C}(\OO^T)$ can be extended by zero at $\infty$, and the extension belongs to~$C(X)$;
	restriction on $\OO^T$ of any function from $C(X)$ that vanishes at $\infty$, belongs to $\cir{C}(\OO^T)$.
	
	Consider $C$*-subalgebra in $C(X)$, generated by functions $\ti\tau[\si]$ naturally extended to $X$.
	It follows from~\re{minmin} that this $C$*-subalgebra satisfies conditions of Stone-Weierstrass theorem.
	Evidently, this $C$*-subalgebra consists of all continuous functions vanishing at $\infty$.
	So we obtain that $C$*-subalgebra in $\cir{C}(\OO^T)$ generated by functions $\ti\tau[\si]$, 
	coincides with $\cir{C}(\OO^T)$.
\end{proof}

Since $\hat\pi$ is a homomorphism (Lemma~\ref{isomorphism}) the set
\begin{equation}
	\widehat\AA := \hat\pi(\cir{C}(\OO^T)) 
	\label{hAC}
\end{equation}
is a $C$*-subalgebra in $B(\UU^T)/\KK(\UU^T)$ (see section~\ref{algebras_sec}).

According to the description of a $C$*-algebra generated by some set, given in the section~\ref{algebras_sec},
every element $a$ of $C$*-subalgebra $\AA$, generated by the set of operator eikonals $I^T[\si]$,
can be approximated by operators of the following form
\begin{equation}
	b = \sum_{i=1}^{N} \te_i\, \prod_{j=1}^{n_i} I^T[\si_{i,j}],\quad
	\te_i \in \CC.
	\label{polynom}
\end{equation}
By Corollary~\ref{Cor} we have
$$
	b = \sum_{i=1}^{N} \te_i\, \prod_{j=1}^{n_i} E^T[\ti\tau^T[\si_{i,j}]]
	+ K,\quad
	K \in \KK(\UU^T).
$$
Since $\pi$ is homomorphism 
(recall that $\pi$ is a canonical homomorphism from $B(\UU^T)$ to $B(\UU^T)/K(\UU^T)$)
obtained relation implies
$$
	\pi(b) =
	\sum_{i=1}^{N} \te_i\, \prod_{j=1}^{n_i}
	\pi \left(E^T[\ti\tau^T[\si_{i,j}]]\right) = 
	\sum_{i=1}^{N} \te_i\, \prod_{j=1}^{n_i}
	\hat\pi \left(\ti\tau^T[\si_{i,j}]\right). 
$$
As $\hat\pi$ is homomorphism as well
we can permute $\hat\pi$ and product and obtain the equality
\begin{equation}
	\pi(b) = \hat\pi (f),
\end{equation}
where
\begin{equation}
	f = \sum_{i=1}^{N} \te_i\, \prod_{j=1}^{n_i}
	\ti\tau^T[\si_{i,j}] \in \cir{C}(\OO^T).
	\label{fk}
\end{equation}
Now let $\{b_k\}$ be a sequence if elements of the form~\re{polynom},
converging to $a\in\AA$,
$f_k \in \cir{C}(\OO^T)$ are corresponding functions of the form~\re{fk}.
The sequence $\pi(b_k) = \hat\pi(f_k)$ is contained in $\widehat\AA$.
Since $\widehat\AA$ is $C$*-algebra, it contains the limit of this sequence.
Thus $\pi(\AA) \ss \widehat\AA$.

From the other hand, functions $f$ of the form~\re{fk}
are dense in $\cir{C}(\OO^T)$, which follows from Lemma~~\ref{minTau}.
Hence the set of ranges $\pi(b)$, where $b$ ranges over all elements of the form~\re{polynom},
is dense in $\widehat\AA$, and the more $\pi(\AA)$ is dense in $\widehat\AA$.
As the range $\pi(\AA)$ is a $C$*-algebra it is closed and the following equality holds true
$$
	\pi(\AA) = \widehat\AA.
$$

Now applying \re{homfactor} to $C$*-algebra $\AA$ and homomorphism $\pi$ restricted to $\AA$
(the kernel of this restriction is $\AA \cap \KK(\UU^T)$)
we obtain
$$
	\AA / (\AA \cap \KK(\UU^T)) \simeq \pi(\AA) = \widehat\AA = \hat\pi(\cir{C}(\OO^T)). 
$$
Lemma~\ref{isomorphism} guarantees that homomorphism $\hat\pi$ is injective,
so it is an isomorphism between $\cir{C}(\OO^T)$ and its range.
Combined with the previous calculation this implies that
$$
	\AA / (\AA \cap \KK(\UU^T)) \simeq \cir{C}(\OO^T).
$$
This completes the proof of Theorem~\ref{thetheorem_alg}.

\end{document}